\setlist{nolistsep}
\newcommand{\rot}[1]{\rotatebox{30}{\scalebox{0.8}{#1}}}
\newcommand{\rotmore}[1]{\rotatebox{20}{\scalebox{0.8}{#1}}}
\newcounter{lcount}
\newcommand{\cmt}[1]{}
\newcommand{\epath}{\mathsf{ep}}
\newcommand{\acar}{\ensuremath{\mathbf{0}}}
\newcommand{\acdr}{\ensuremath{\mathbf{1}}}
\newcommand{\bcar}{\ensuremath{\bar\acar}}
\newcommand{\bcdr}{\ensuremath{\bar\acdr}}
\newcommand{\clazy}{\ensuremath{{\mathbf{2}}}}
\newcommand{\Lfun}[3]{\ensuremath{\mathcal{L}(#1,#2,#3)}}
\newcommand{\Lfunonly}{\ensuremath{\mathcal{L}}}
\newcommand{\Df}[2]{\ensuremath{\mathsf{D}_{\mathit #1}^{#2}}}
\newcommand{\Lf}[3]{\ensuremath{\Lfonly_{\mathit #1}^{#2}( {\mathit #3})}}
\newcommand{\Lfonly}{\ensuremath{\mathsf{LF}}}
\newcommand{\Lfone}[1]{\ensuremath{\Lfonly_{\mathit #1}}}
\newcommand{\Lv}{\ensuremath{\mathsf{L}}}
\newcommand{\Lanv}[2]{\ensuremath{\Lv_{{#1}}^{#2}}}
\newcommand{\var}[1]{\ensuremath{\langle #1\rangle}}
\newcommand{\mainpgm}{\ensuremath{\mathbf{main}}}
\newcommand{\main}{\ensuremath{\mathbf{main}}}
\newcommand{\figrule}{}
\newcommand{\qed}{\hfill\ensuremath{\square}}
 \newenvironment{proof}[1][Proof]{\begin{trivlist}
 \item[\hskip \labelsep {\bfseries #1}]}{\hfill\qed\end{trivlist}}
\newcommand{\len}{\ \vdash^l\ }
\newcommand{\heap}{\ensuremath{H}}       
\newcommand{\pp}[2]{\ensuremath{#2}} 
\def\drawplusplus#1#2#3{\hbox to 0pt{\hbox to #1{\hfill\vrule height #3 depth
      0pt width #2\hfill\vrule height #3 depth 0pt width #2\hfill
      }}\vbox to #3{\vfill\hrule height #2 depth 0pt width
      #1 \vfill}}
\newcommand{\added}[1]{#1}
\newcommand{\todelete}[1]{}
\newcommand{\warning}[1]{}
\newcommand{\cred}[1]{\colorbox{lightgray}{\ensuremath{#1}}}
\newcommand{\addSubFig}[1] {
  \begin{subfigure}{\textwidth}
  \centering
    \includegraphics[width=.99\textwidth]{#1.pdf}
  \end{subfigure}
  \begin{subfigure}{\textwidth}
  \centering
    \includegraphics[width=.99\textwidth]{#1_win.pdf}
  \end{subfigure}
}
\def\myvec{\mathaccent"017E } 
\newcommand{\stk}{{S}}      
\newcommand{\bang}{\mbox{\sc bang}}
\newtheorem{theorem}{Theorem}[section]
\newtheorem{proposition}[theorem]{Proposition}
\newtheorem{definition}[theorem]{Definition}
\newtheorem{lemma}[theorem]{Lemma}
\begin{document}

\setlength{\pdfpageheight}{\paperheight}
\setlength{\pdfpagewidth}{\paperwidth}


\title{Liveness-Based Garbage Collection for Lazy Languages}

%
%
%
%
%
%
\authorinfo {Prasanna Kumar. K} 
       {IIT Bombay, Mumbai 400076, India}
       {prasannak@cse.iitb.ac.in}
\authorinfo {Amitabha Sanyal} 
       {IIT Bombay, Mumbai 400076, India}
       {as@cse.iitb.ac.in}
\authorinfo {Amey Karkare}
       {IIT Kanpur, Kanpur 208016, India}
       {karkare@cse.iitk.ac.in}

\maketitle

\begin{abstract} 
We consider  the problem of reducing  the memory required to  run lazy
first-order functional programs. Our approach is to analyze
programs  for liveness  of  heap-allocated data.   The  result of  the
analysis is  used to preserve  only live data---a subset  of reachable
data---during garbage  collection.  The result  is an increase  in the
garbage reclaimed and  a  reduction in  the peak memory requirement 
of programs.   While this  technique has already  been shown  to yield
benefits for  eager first-order  languages, the  lack of  a statically
determinable execution  order and  the presence  of closures  pose new
challenges  for lazy  languages.  These  require changes  both in  the
liveness analysis itself and in the design of the garbage collector.

To  show the  effectiveness of  our method,  we implemented  a copying
collector that uses  the results of the liveness  analysis to preserve
live  objects,  both evaluated  and  closures.   Our
experiments confirm  that for  programs running with  a liveness-based
garbage  collector, there  is a  significant decrease  in peak  memory
requirements.   In addition,  a  sizable reduction  in  the number  of
collections  ensures that  in spite  of using  a more  complex garbage
collector, the execution  times of programs running  with liveness and
reachability-based collectors remain comparable.
\end{abstract}

\category{D.3.4}{Programming Languages}{Processors}[Memory Management
  (Garbage Collection), Optimizations]
\category{F.3.2}{Logic and Meanings Of Programs}{Semantics of
  Programming Languages}[Program Analysis]

\terms{Algorithms, Languages, Theory}

\keywords{Heap Analysis, Liveness Analysis, Memory Management, Garbage
  Collection, Lazy Languages}

\section{Introduction}
\label{sec:intro}

Functional  programs  make  extensive  use  of  dynamically  allocated
memory.  The allocation is either explicit (i. e., using constructors)
or  implicit   (creating  closures).   Programs  in   lazy  functional
languages put additional  demands on memory, as  they require closures
to be carried from the point of creation to the point of evaluation.

While  the runtime  system  of most  functional  languages includes  a
garbage collector to efficiently  reclaim memory, empirical studies on
Scheme~\cite{karkare06effectiveness}   and  on
Haskell~\cite{rojemo96lag} programs have shown that garbage collectors
leave uncollected a large number  of memory objects that are reachable
but not live (here {\em live} means the object can potentially be used
by the program at a  later stage).  This results in unnecessary memory
retention.

In this paper,  we propose the use of liveness  analysis of heap cells
for garbage collection (GC) in a lazy first-order functional language.
The central  notion in  our analysis is  a generalization  of liveness
called {\em  demand}---the pattern of future  uses of the value  of an
expression.   The  analysis  has  two parts.   We  first  calculate  a
context-sensitive  summary   of  each   function  as  a   {\em  demand
  transformer} that transforms a {\em  symbolic} demand on its body to
demands  on its  arguments.   This  summary is  used  to step  through
function calls  during analysis.   The concrete  demand on  a function
body  is  obtained through  a  conservative  approximation similar  to
0-CFA~\cite{Shivers:1988} that  combines the demands on  all the calls
to  the function.   The result  of the  analysis is  an annotation  of
certain program points with  deterministic finite-state automata (DFA)
capturing the  liveness of  variables at  these points.   Depending on
where GC  is triggered, the  collector consults  a set of  automata to
restrict reachability during marking.  This  results in an increase in
the garbage reclaimed and consequently in fewer collections.

\added{While  the idea  of  using static  analysis  to improve  memory
  utilization  has  been  shown  to   be  effective  for  {\em  eager}
  languages~\cite{asati14lgc,       HofmannJ03,       inoue88analysis,
    lee05static}, a straightforward extension  of the technique is not
  possible  for  lazy  languages,  where  heap-allocated  objects  may
  include   closures    (runtime   representations    of   unevaluated
  expressions).   }   \todelete{Designing  a  liveness  based  garbage
  collector   for  lazy   languages  poses   significant  challenges.}
Firstly, since  data is made  live by  evaluation of closures,  and in
lazy languages  the place in  the program where this  evaluation takes
place cannot  be statically determined, laziness  complicates liveness
analysis itself.  Moreover, for liveness-based  GC to be effective, we
need to  extend it  to closures  apart from  evaluated data.   Since a
closure can escape the scope in which  it was created, it has to carry
the  liveness  information  of   its  free  variables.   As  execution
progresses  and possible  future uses  are eliminated,  we update  the
liveness  information  in  a  closure with  a  more  precise  version.
For  these  reasons,  the  garbage   collector  also  becomes
significantly more  complicated than a liveness-based  collector for an
eager language.

Experiments    with   a    single    generation   copying    collector
(Section~\ref{sec:experiments})   confirm  the   expected  performance
benefits.  Liveness-based collection results in an increase in garbage
reclaimed.  As  a consequence, there is  a reduction in the  number of
collections from 1.6X  to 23X and a decrease  in the minimum
memory requirement from  1X to 1389X.  As  an added benefit,
there is also a reduction in the overall execution time in 5 out of 12
benchmark programs.

\subsection{Motivating example}
\label{sec:motiv}

\begin{figure}[t!]
  \includegraphics[width=\columnwidth]{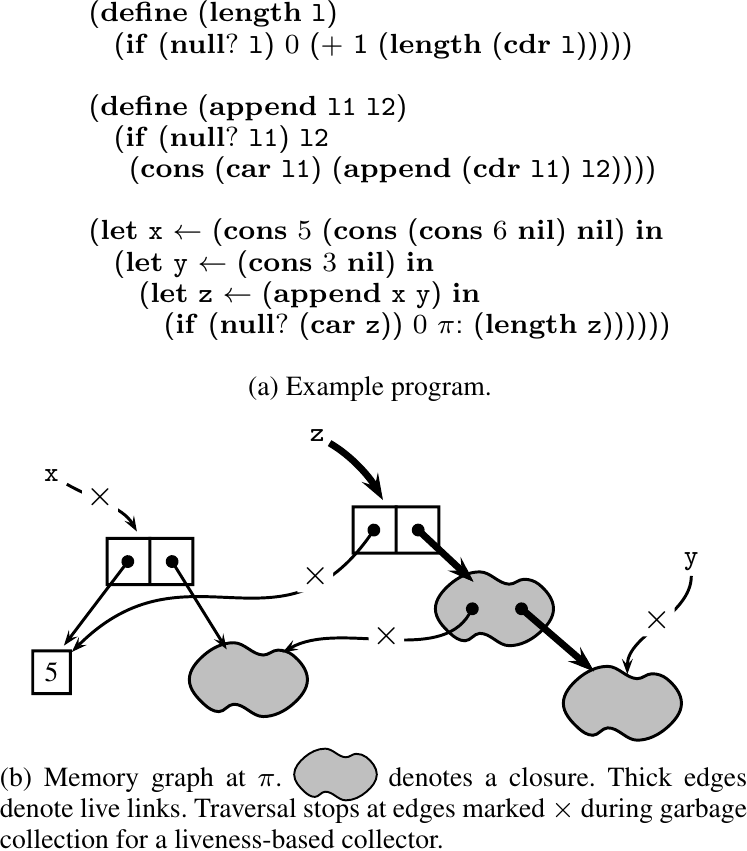}
  \caption{Example Program and its Memory Graph}\label{fig:mot-example}
\end{figure}

  Figure~\ref{fig:mot-example}   shows  an   example
program and the state of the heap at the program point $\pi$\footnote{We write
$\pi\!:\!e$ to  associate a  label $\pi$ with  the program  point just
before  the  expression  $e$. The label is not part of the syntax.},  i.e., just  before  the  evaluation  of
$(\length\ \pz)$.  The heap is represented  by a graph in which a node
either  represents  atomic  values   ($\NIL$,  integers,  etc.),  or  a
\CONS\  cell  containing  $\CAR$  and  $\CDR$  fields,  or  a  closure
(represented  by  shaded  clouds).   Edges   in  the  graph  are  {\em
  references} and represent values of variables or fields.  The figure
shows the  lists \px\  and \pz\  partially evaluated  due to  the
\SIF\ condition 
(\NULLQ~(\CAR~\pz)).
The edges  shown by thick  arrows are those  which are live  at $\pi$.
Thus if a GC takes  place at $\pi$ with the heap shown
in Figure~\ref{fig:mot-example}(b),  a liveness-based  collector (LGC)
will preserve only  the cells referenced by $\pz$, and  the live cells
constituting the  closure referenced by $(\CDR~\pz)$.   In contrast, a
reachability-based collector (RGC) will preserve all cells.

In  this work  we show  that  static analysis  of heap  data can  help
garbage  collectors   in  reclaiming   more  garbage.    The  specific
contributions of this paper are:
\added{
\begin{itemize}
\item We  propose an interprocedural liveness  based GC
  scheme  for a  lazy first-order functional  language and  prove its
  correctness.  To the  best of our knowledge, this is  the first work
  that uses  the results  of an  interprocedural liveness  analysis to
  garbage    collect    both     evaluated    data    and    closures.
  Thomas~\cite{Thomas19951} describes a  copying garbage collector for
  the Three  Instruction Machine (TIM)~\cite{Fairbairn1987}  that only
  preserves live closures  in a function's environment  (also called a
  frame). However, in the absence of  details, it is not clear whether
  a) the  scope of the  method is  interprocedural, and b)  it handles
  algebraic datatypes like lists (the original design of TIM did not).
  All  other  previous attempts~\cite{shaham01heap,  ran.shaham-sas03,
    shaham02estimating,  asati14lgc, karkare06effectiveness}  involved
  either imperative or eager functional languages.

\item  We  formulate a  liveness  analysis  for the  lazy  first-order
  functional language  and prove  its correctness. The  proof involves
  specifying  liveness   for  the  language  through   a  non-standard
  semantics and then proving the  analysis correct with respect to the
  specification.

\item The  analysis results  in a set  of context-free  grammars along
  with  a fixed  set  of non-context-free  productions.  The  decision
  whether to copy a cell during  GC translates to a membership problem
  for such  grammars.  Earlier research assumed  the undecidability of
  this membership question and  used an over-approximation to overcome
  it.  In this paper, we provide  a formal proof of the undecidability
  of this problem.

\item We have implemented a garbage collector that uses the result of
  liveness  analysis to  retain  live cells.   Our experiments  reveal
  interesting  space-time   trade-offs  in  the  engineering   of  the
  collector---for  example, updating  liveness information  carried in
  closures during  execution results in more  garbage being collected.
  Empirical results  show the effectiveness of  liveness-based GC.
\end{itemize}
}
\todelete{
\begin{itemize}
\item  We  formulate  a  liveness  analysis for  a  lazy  first-order
  functional language and  prove its correctness. The 
  proof involves specifying liveness through a non-standard semantics  and then proving the analysis 
  correct with respect to the specification.   
\item The analysis results  in a set  of context-free  grammars along
  with  a fixed  set  of non-context-free  productions.  The  decision
  whether   to  copy  a  cell   during  GC  translates  to a  membership problem  for such  grammars.  In  this
  paper, we  have shown the  undecidability of this problem and 
  circumvent the  problem by over-approximating the  grammars
  with DFAs.
\item We  have implemented a garbage  collector that uses the  DFAs to
  retain live  cells.  Our  experiments reveal  interesting space-time
  trade-offs  in  the  engineering  of  the  collector---for  example,
  updating liveness  information carried in closures  during execution
  results  in   more  garbage  being  collected.    Empirical  results
  show the effectiveness of liveness-based GC.
\end{itemize}

As far  as we know, this  is the first work  on liveness-based GC      for      lazy       languages.       All      previous
attempts~\cite{shaham01heap,   ran.shaham-sas03,   shaham02estimating,
  asati14lgc, karkare06effectiveness} have  involved either imperative
or eager functional languages.
}
\subsection{Organization of the paper}

Section~\ref{sec:defs}  introduces  the   syntax  of  the  programming
language considered  and gives a small-step  operational semantics for
it.  The liveness  analysis for this language and  its soundness proof
is          presented          in          Section~\ref{sec:liveness}.
Section~\ref{sec:computing} describes  the formulation of  liveness as
grammars. We also give a proof  of undecidability of such grammars and
show     how     they     can     be     approximated     by     DFAs.
Section~\ref{sec:GC-scheme}   discusses   details   of   the   garbage
collector,  in  particular  the  use  of  liveness  DFAs  for  GC.     We     report    our    experimental     results    in
Section~\ref{sec:experiments}    along    with   some    observations.
Section~\ref{sec:relatedwork}  discusses  previous   work  related  to
GC  and   liveness  and  Section~\ref{sec:conclusion}
discusses possible extensions and concludes the paper.
\section{The target language---syntax and semantics}
\label{sec:defs}
Figure~\ref{fig:lang-syntax} describes the syntax  of our language. It
is a first order language with lazy semantics. Programs are restricted
to        be        in        Administrative        Normal        Form
(ANF)~\cite{chakravarty03perspective}, where  all actual  parameters to
functions  are  variables.  While  this  restriction  does not  affect
expressibility,  this form  has  the benefit  of  making explicit  the
creation of closures through the $\LET$ construct.   We
  assume that  $\LET$s  in  our  language  are  non-recursive;  in  the
  expression $\LET\,\, x  \leftarrow s\,\, \IN\,\, e$,  $x$ should not
  occur in  $s$. The restriction of  \LET\ to a single  definition is
for ease  of exposition---generalization to multiple  definitions does
not add conceptual difficulties.  We further restrict each variable in
a program  to be  distinct, so that  no scope  shadowing occurs; this
simplifies reasoning.

We denote the body of a function ${\mathit  f}$   as $e_{\mathit f}$.
 We assume that each program has a distinguished function
\mainpgm,    defined as\linebreak    $(\DEFINE\      ({\tt
  \mainpgm})\  e_\mainpgm)$,  and  the program begins execution
with the call to \mainpgm.  

\begin{figure}[t]\footnotesize
\renewcommand{\arraystretch}{0.9}
\begin{eqnarray*}
   p \in \mathit{Prog} & \!\!\!::=\!\!\! & d_1 \ldots d_n \,\,\,\, e_\mainpgm
   \hspace{5em} \,\,\,\,\,\,\,\,\,\; \mbox{\em --- program}\\
    \mathit{df} \in Fdef & \!\!\!::=\!\!\! & (\DEFINE\,\, (f\,\, x_1 \,\, \ldots
\,\,x_n)\,\,
    e)
    \hspace{0.2em} \ \ \ \ \ \ \ \ \  \mbox{\em --- function def} \\
e \in \mathit{Expr} & \!\!\!::=\!\!\! &
\left\{\begin{array}{@{}ll@{\hspace{1em}}l}
       (\SIF\,\, x\,\, e_1\,\, e_2) && \!\!\!\mbox{\em --- conditional} \\
       (\LET\,\, x \leftarrow s\,\, \IN\,\, e) &&\!\!\! \mbox{\em --- let
binding} \\
       (\SRETURN\,\, x) && \!\!\!\mbox{\em --- return from function}
    \end{array}\right. \\
s \in \mathit{App} & \!\!\!::=\!\!\!  &
\left\{\begin{array}{@{}l@{\hspace{1.2em}}l}
       k & \mbox{\em --- constant (numeric or $\NIL$)}\\
       (\CONS\,\, x_1\,\, x_2) & \mbox{\em --- constructor} \\
       (\CAR\,\, x) &  \mbox{\em --- selects $1^{st}$ part of \CONS} \\
       (\CDR\,\, x) &  \mbox{\em --- selects $2^{nd}$ part of \CONS} \\
       (\NULLQ\,\, x) & \mbox{\em --- returns 0 if x is not $\NIL$} \\
       (\PRIM\,\, x_1\,\, x_2) & \mbox{\em generic arithmetic} \\
       (f\,\, x_1\,\,\ldots\,\, x_n) & \mbox{\em --- function application}
    \end{array}\right.
\end{eqnarray*}
  \caption{The syntax of our language}\label{fig:lang-syntax}
\normalsize
\end{figure}

\begin{figure*}[t!]

\begin{center}\footnotesize
\renewcommand{\arraystretch}{1.2}
\begin{tabular}{|c|c|c|}
\hline
Premise & Transition & Rule name \\
\hline
\hline 
          & $\rho,\, (\rho', \ell, e)\!:\!S,\, \heap,\, \kappa
  \rightsquigarrow \rho',\, S,\, \heap[\ell := \kappa],\, e$    &  {\sc const}
\\
\hline
          & {$\rho, \,(\rho', \ell, e)\!:\!S,\, \heap,\, (\CONS~x~y)
\rightsquigarrow
$  $\rho',\, S,\, \heap[\ell := (\rho(x), \rho(y))],\, e$}     &  {\sc cons} \\
\hline

$\heap(\rho(x)) \mbox{ is } (v, d)$ & $\rho,\, (\rho', \ell, e)\!:\!S,\, \heap,\,
(\CAR~x)  \rightsquigarrow \rho',\, S,\, \heap[\ell := v],\, e$      &
{\sc car-select} \\
\hline

$\heap(\rho(x)) \mbox{ is } (\langle s, \rho'\rangle, d)$ & $\rho,\, S,\,  \heap,\,
(\CAR~x)  \rightsquigarrow \rho', \,(\rho, addr(\langle s, \rho'\rangle), (\CAR~x))\!:\!S,\, \heap,\, s$      &
{\sc car-1-clo} \\
\hline

$\heap(\rho(x)) \mbox{ is } \langle s, \rho'\rangle$ & $\rho,\, S,\, \heap,\, (\CAR~x)
\rightsquigarrow
\rho',\, (\rho, \rho(x), (\CAR~x))\!:\!S, \,\heap,\, s$      &
{\sc car-clo}
\\
\hline

$\heap(\rho(x)), \heap(\rho(y)) \in \mathbb{N}$
 & {$\rho,\, (\rho', \ell, e)\!:\!S,\, \heap,\, (+~x~y)  \rightsquigarrow$
$\rho', \,S,\, \heap[\ell := \heap(\rho'(x)) + \heap(\rho'(y))],\, e$}      &
{\sc prim-add} \\
\hline

$\heap(\rho(x)) \mbox{ is } \langle s, \rho'\rangle$ & $\rho,\, S,\, \heap,\, (+~x~y)
\rightsquigarrow
\rho',\, (\rho, \rho(x), (+~x~y))\!:\!S,\, \heap,\, s$      &
{\sc prim-1-clo} \\
\hline
$\heap(\rho(y)) \mbox{ is } \langle s, \rho'\rangle $ & $\rho,\, S,\, \heap,\, (+~x~y)
\rightsquigarrow
\rho', (\rho,\, \rho(y),\, (+~x~y))\!:\!S,\, \heap,\, s$      &
{\sc prim-2-clo} \\
\hline
{$\mathit{f}~\mbox{defined as}$
$~(\DEFINE~(f~\myvec{y})~e_{\mathit{f}})$}  & $\rho,\, S,\, \heap,\,
(f~\myvec{x})  \rightsquigarrow
[\myvec{y} \mapsto \rho(\myvec{x})],\, S,\, \heap,\, e_{\mathit{f}}$      &
{\sc funcall} \\
\hline
$\ell$ is a new location& {$\rho,\, S,\, \heap,\, (\LET~x\leftarrow s~\IN~e)
  \rightsquigarrow$
$\rho\oplus[x \mapsto \ell],\, S,\, \heap[\ell := \langle s,
    \lfloor\rho\rfloor_{FV(s)} 
    \rangle],\, e$} &
{\sc let} \\
\hline
$\heap(\rho(x)) \ne 0$ & $\rho,\, S,\, \heap,\, (\SIF~x~e_1~e_2)   \rightsquigarrow
\rho,\, S,\, \heap,\,  e_1$ & {\sc if-true} \\
\hline
$\heap(\rho(x)) = 0$ & $\rho,\, S,\, \heap,\, (\SIF~x~e_1~e_2)   \rightsquigarrow
\rho,\, S, \,\heap, \, e_2$ & {\sc if-false} \\
\hline
$\heap(\rho(x)) = \langle s, \rho' \rangle $ & {$\rho,\, S,\, \heap,\,
  (\SIF~x~e_1~e_2)   \rightsquigarrow
\rho',\, (\rho, \rho(x), (\SIF~x~e_1~e_2))\!:\!S,\, \heap, \, s$}
&
{\sc if-clo} \\
\hline
{$\heap(\rho(x))~\mbox{is in WHNF with value}~v$}& $\rho,\, (\rho', \ell,
e)\!:\!S,\, \heap,\,
(\SRETURN~x)  \rightsquigarrow \rho',\, S,\, \heap[\ell := v],\, e$ &
{\sc return-whnf}\\
\hline
$\heap(\rho(x)) = \langle s, \rho' \rangle $ & {$\rho,\, S,\, \heap,\, (\SRETURN~x)
  \rightsquigarrow$
$\rho',\, (\rho, \rho(x), (\SRETURN~x))\!:\!S,\, \heap,\,  s$} &
{\sc return-clo} \\
\hline
\end{tabular}
\caption{A small-step semantics for the language. \label{fig:lang-semantics}}
\end{center}
\end{figure*}

\subsection{Semantics}\label{sec:semantics}
We now give  a small-step semantics for our  language.
We first specify the domains used by the semantics:
\[
\renewcommand{\arraystretch}{1}
\begin{array}{@{}r@{\ }l@{\ \ }c@{\ \ }l@{\hspace{0.5em}}l}
\heap: & \mathit{Heap} & =&\mathit{Loc} \rightarrow (Data + \{empty\}) & \mbox{-- Heap}\\
d: & \mathit{Data} &=&\mathit{Val} + \mathit{Clo} & \mbox{-- Values \& Closures} \\
v:   & \mathit{Val} &=& \mathbb{N} + \{\NIL\} + \mathit{Data \times Data}& \mbox{-- Values}\\
c:   & \mathit{Clo} &=& \mathit{(App \times Env)}& \mbox{-- Closures}\\
\rho: & \mathit{Env} &=&\mathit{Var} \rightarrow \mathit{Loc} &
\mbox{-- Environment} \\
\end{array}
\]

Here $\mathit{Loc}$  is a countable set  of locations in the  heap.  A
non-empty location either contains a {\em closure}, or a value in Weak
Head Normal Form  (WHNF)\cite{Jones87}. In our case, a  WHNF value can
be a  number, the  empty list  $\NIL$ or a  \CONS\ cell  with possibly
unevaluated   constituents.   A   closure  is   a  pair   $\langle  s,
\rho\rangle$ in  which $s$ is  an unevaluated application,  and $\rho$
maps free  variables of $s$  to their respective locations.  Since all
data objects are boxed, we model  an environment as a mapping from the
set of  variables of  the program $\mathit{Var}$  to locations  in the
heap.

The  semantics  of   expressions  (and  applications\footnote{In  most
 contexts, we  shall use the  term 'expression' and the  notation $e$
to stand   for both expressions and applications.}) 
are given by transitions of the form $\rho, \stk, \heap, e \rightsquigarrow
\rho',  \stk', \heap',  e'$.  Here  $\stk$ is  a stack  of continuation
frames.  Each  continuation frame  is a triple $(\rho,  \ell, e)$,
signifying that the  location $\ell$ has to be updated  with the value
of the  currently evaluating  expression and $e$  is to  be evaluated
next  in the  environment $\rho$.  
The  initial  state  of  the  transition  system  is:
\[([\;]_\rho,\,     (\rho_\mathit{init},     \,    \ell_{\ans}     ,\,
(\print~\ans)):[\;]_{S},   [\;]_{\heap},\,   (\mainpgm))\]  in   which
$[\;]_\rho$, $[\;]_{\mbox  \footnotesize {H}}$ and $[\;]_{S}$  are are
the empty environment, heap and  stack respectively. The initial stack
consists  of  a  single  continuation   frame  in  which  \ans\  is  a
distinguished variable that will eventually  be updated with the value
of  (\mainpgm),  and  $\rho_\mathit{init}$  maps \ans\  to  a  location
$\ell_{\ans}$.  In addition, \print\ is a function modeling a printing
mechanism---a   standard   run-time   support  assumption   for   lazy
languages~\cite{Jones87}---that prints  the value of  (\mainpgm).  The
operator~$:$ pushes elements on top of the stack.

The   notation  $[\myvec{x}   \mapsto  \myvec{\ell}]$   represents  an
environment  that  maps  variables  $x_i$ to  locations  $\ell_i$  and
$\heap[\ell := d]$  indicates updation of  \heap\ at $\ell$
with  $d$.   $\rho \oplus  \rho'$  represents  the environment  $\rho$
shadowed  by  $\rho'$  and  $\lfloor \rho  \rfloor_X$  represents  the
environment  restricted  to  the  variables in  $X$.  Finally  $FV(s)$
represents the  free variables  in the  application $s$  and $addr(c)$
gives the address of the closure $c$ in the heap.

The small-step semantics  is shown in Figure~\ref{fig:lang-semantics}.
Unlike   an   eager  language,   evaluation   of   a  let   expression
$(\LET~x\leftarrow  s~\IN~e)$ does  not  result in  the evaluation  of
$s$. Instead,  as the {\sc let}  rule shows, a closure  is created and
bound to  $x$. The  program points which  trigger evaluation  of these
closures   are  an $\SIF$   condition ({\sc  IF-clo})   and
 a \SRETURN\ ({\sc  return-clo}).  We  call such  points \emph{evaluation
  points   $(\epath)$}  and   label  them   with  $\psi$   instead  of
$\pi$. As an  example of closure  evaluation, we  explain the
three rules for  $(\CAR~x)$.  If $x$ is a closure,  it is evaluated to
WHNF, say $(d_1, d_2)$.  This is  given by the rule {\sc car-clo}.  If
$d_1$ is  not in WHNF,  it is  also evaluated ({\sc  car-1-clo}).  The
address  to be  updated  with the  evaluated value  is
recorded  in  a  continuation  frame.    This  is  required  for the
evaluation  to be lazy, else $d_1$ may  be evaluated more than once
due to  sharing~\cite{Jones87}.  Only after this  is the actual selection  done ({\sc
  car-select}).

\section{Liveness}
\label{sec:liveness}

A variable is {\em live} if there  is a possibility of its value being
used in  future computations and  dead if  it is definitely  not used.
Heap-allocated data  needs a richer model  than classical liveness---a
model which talks about liveness of references.  


An {\em access path}
is a  prefix-closed set of strings  over $\{\acar,\acdr\}^\ast$, where
$\acar$,  $\acdr$ represent  access  using $\CAR$  and $\CDR$  fields respectively.
Given an initial location $\ell$ (usually a reference corresponding to
a variable)  and a  heap \heap, semantically  an access  path $\alpha$
represents     a      reference,     denoted     $\heap\llbracket\ell,
\alpha\rrbracket$,  in the  heap  that is  obtained  by starting  with
$\ell$ and chasing the \CAR\ or  \CDR\ fields in the heap as specified
by  the  access   path.   $\heap\llbracket\ell,  \alpha\rrbracket$  is
defined only if  the path followed in the  heap is \emph{closure-free}
(does not cross closures), else it is undefined.

Access paths are used to  represent liveness. As an example, a list
$x$ with 
liveness  $\{\epsilon,  \acar,  \acdr,  \acdr\acar,
\acdr\acdr,  \acdr\acdr\acar\}$ means that future computations    only
refer up  to the  second and  third members of  $x$.  A  {\em liveness
  environment} is a mapping from  variables to access paths, but often
expressed as  a set,  for example  by writing  $\{x.\epsilon, x.\acdr,
x.\acdr\acdr,        y.\epsilon\}$        instead        of        $[x
  \mapsto\{\epsilon,\acdr,\acdr\acdr\},          y\mapsto\{\epsilon\},
  z\mapsto\{\}]$.    In  this   notation,  $y   \mapsto  \{\epsilon\}$
represents access using $y$ itself and $z \mapsto \{\}$ indicates 
$z$ is dead.  In lazy  languages, liveness environments are associated
with regions of programs instead of program points.

A notion  that generalizes liveness  is {\em demand}.   While liveness
gives the patterns of future uses of a variable, demand represents the
future uses of the value of an expression.  The demand on an expression
$e$   is   also   a   set    of   access   paths---the   subset   of
$\{\acar,\acdr\}^\ast$ which the  context of $e$ may  explore of $e$'s
result.   To see  the need  for demands,  consider the expression
$(\LET~x\leftarrow  (\CDR~y)~\IN~   (\SRETURN~x))$.   Assume   that  the
context of this expression places  the demand $\lbrace \epsilon, \acar
\rbrace$. Since the  value of the expression is the  value of $x$, the
demand translates to the liveness  $[x \mapsto \lbrace \epsilon, \acar
  \rbrace]$.   Due  to  the  \LET\   definition  which  binds  $x$  to
$(\CDR~y)$, the liveness of $x$  now becomes the demand on $(\CDR~y)$.
This, in  turn, generates  the liveness $\lbrace  y.\epsilon, y.\acdr,
y.\acdr\acar  \rbrace$.   These are  the $y$-rooted  accesses
required to  explore $\lbrace  \epsilon, \acar  \rbrace$ paths  of the
result of $(\CDR~y)$.
  
We use $\sigma$  to range over demands, $\alpha$ to  range over access
paths and $\Lv$  to range over liveness environments.  The liveness of
an  individual variable  $y$ in  \Lv\  is $\Lv(y)$,  but more  commonly
written as  $\Lv_y$.  The notation $\sigma_1\sigma_2$  denotes the set
$\lbrace  \alpha_1\alpha_2 \mid  \alpha_1 \in  \sigma_1, \alpha_2  \in
\sigma_2\rbrace$.  Often we shall abuse  notation to juxtapose an edge
label and  a set  of access  paths; $\acar\sigma$  is a  shorthand for
$\lbrace\acar\rbrace\sigma$.

\renewcommand{\pp}[2]{\ensuremath{#1\!\!:\!#2}} 

\begin{figure}[t]
  \input{example-to-illustrate-liveness}  
  \caption{Example illustrating liveness of closures}\label{fig:lv-closure}
\end{figure}
\begin{figure}[t!]
  \renewcommand{\arraystretch}{1.2}
  \small
\hspace*{-6pt}
$\begin{array}{@{}r@{\ }c@{\ }l@{}}
\mathit{ref\/}(\kappa,\sigma,\Lfonly) &=& \emptyset \mbox{, for $\kappa$ a constant, including \NIL} \\

\mathit{ref\/}(\pp{\pi}{(\CONS~x~y)},\sigma,\Lfonly)
&=& \{x_{\pi}.\alpha \mid \acar\alpha \in \sigma\} \cup
\{y_{\pi}.\alpha 
\mid \acdr\alpha \in \sigma\} \\

\mathit{ref\/}(\pp{\pi}{(\CAR~x)},\sigma,\Lfonly)
&=&\!\left\{\!\!\!\!\begin{array}{l@{\ }l}
  \{x_{\pi}.\epsilon\} \cup \{x_{\pi}.\acar\alpha \mid \alpha \in
  \sigma\}, & \mbox{if}~\sigma \ne \emptyset\\
  \emptyset  & \mbox{otherwise}
\end{array}\right. \\

\mathit{ref\/}(\pp{\pi}{(\CDR~x)},\sigma,\Lfonly)
&=&\!\left\{\!\!\!\!\begin{array}{l@{\ }l}
\{x_{\pi}.\epsilon\} \cup \{x_{\pi}.\acdr\alpha \mid \alpha \in
\sigma\}, & \mbox{if}~\sigma \ne \emptyset\\
\emptyset  & \mbox{otherwise}
\end{array}\right. \\

\mathit{ref\/}(\pp{\pi}{(\PRIM~x~y)},\sigma,\Lfonly)
&=&\!\left\{\!\!\!\!\begin{array}{l@{\ }l}
\{x_{\pi}.\epsilon, y_{\pi}.\epsilon\},  & \mbox{if}~\sigma \ne
\emptyset\\
\emptyset  & \mbox{otherwise}
\end{array}\right.\\

\mathit{ref\/}(\pp{\pi}{(\NULLQ~x)},\sigma,\Lfonly)
&=&\!\left\{\!\!\!\!\begin{array}{l@{\ }l}
\{x_{\pi}.\epsilon\},  & \mbox{if}~\sigma \ne \emptyset\\
\emptyset  & \mbox{otherwise}
\end{array}\right. \\

\mathit{ref\/}(\pp{\pi}{(f~\myvec{x})},\sigma,\Lfonly)
&=&  \begin{array}{@{}l}  
  \bigcup_{i=1}^n x_{i_{\pi}}.\Lf{f}{i}{\sigma}
\end{array}
\\

\mathcal{L}((\SRETURN~\pp{\psi\ }{x}),\sigma,\Lfonly) &=& 
\Lv_{\emptyset} \cup \{x.\sigma\}\\

\mathcal{L}((\SIF~\pp{\psi\ }{x~e_1~e_2}),\sigma,\Lfonly) &=&
\!\left\{\!\!\!\!\begin{array}{l@{\ }l}
  \mathcal{L}(e_1,\sigma,\Lfonly) \cup
  \mathcal{L}(e_2,\sigma,\Lfonly) \cup
  \{x.\epsilon\},  & \\ 
  \qquad\qquad\qquad\qquad\qquad\qquad\mbox{if}~\sigma \ne \emptyset & \\
  \emptyset  \qquad\qquad\qquad\qquad\qquad\quad\; \mbox{otherwise} &\\
\end{array}\right. \\

\mathcal{L}((\LET~x \leftarrow~\pi:s~\IN~e),\sigma,\Lfonly) &=&
\mathit{ref\/}(s,\sigma',\Lfonly) \cup \Lv \cup \{x.\sigma'\} 
\cup \Lv''
\\
&&\hspace*{-.65cm}~\mbox{where}~ \mathcal{\Lv} = \mathcal{L}(e,\sigma,\Lfonly),\\
&&~\sigma' = \bigcup_{\pi}  \Lv(x_{\pi})\\
&&~\Lv' = \mathit{ref}(s, \sigma', \Lfonly)\\
&&~\Lv'' = \underset{\py \in FV(s)}{{\bigcup}} \lbrack\py \mapsto \Lv(\py) \cup \Lv'(\py_{\pi})\rbrack
\\ & \\
\end{array}$
\begin{center}
\begin{minipage}{0.95\columnwidth}

  \infrule[live-define]
          {\mathcal{L}(e_f,\sigma,\Lfonly) =
            \bigcup_{i=1}^n z_i.\Lf{f}{i}{\sigma}
            \mbox{ for each $f$ and $\sigma$}
          }
          { \mathit{df_1} \ldots \mathit{df_k} \len \Lfonly
            \\ \makebox[20mm]{\rule{15mm}{0pt}where
              $(\DEFINE\ (f\ z_1\ \ldots\ z_n)\ \ e_f)$ 
              is a member of $\mathit{df_1}
              \ldots \mathit{df_k}$}}
\end{minipage}
\end{center}
\caption{Liveness equations and judgement rule}\label{fig:live-judge}
\end{figure}

\subsection{Liveness analysis for lazy languages}
\label{sec:liveness-analysis}

Consider the  program in Figure~\ref{fig:lv-closure}.   As mentioned
earlier, a lazy evaluation of the $\LET$ expression at $\pi_1$ creates
a closure  for $(\length~\px)$  instead of  evaluating it.   Since the
closure may escape the scope in which it is created, it carries a copy
of $\px$ within itself.  We treat the  copy of $\px$ in the closure as
being separate from the $\px$ introduced  by the $\LET$, and call it a
\emph{closure  variable}.  For  liveness calculations,  such variables
are distinguished  from variables  introduced by  \LET s  and function
arguments  (called \emph{stack  variables},  since they  reside in  the
activation  stack).   We notationally distinguish  a closure  variable
from  its corresponding  stack variable  by subscripting  it with  the
label of the program   point  where   the   closure  was   created\footnote{Multiple
  occurrences  of the  same  variable in  an  application are  further
  distinguished by their positions in the application.}.

  Since a  closure is evaluated  only at evaluation points,  a closure
  variable is attributed  with the same liveness in  the entire region
  of  the  program from  the  point  of  creation  of the  closure  to
  reachable evaluation points.  This is  also true of stack variables,
  because, as we shall see, stack variables derive their liveness from
  closure variables. Thus, there are two major differences between our
  formulation of  liveness of  lazy languages  with liveness  of eager
  languages~\cite{asati14lgc}:   (i)  the   introduction  of   closure
  variables in the  liveness calculations, and (ii)  a single liveness
  value for each  variable that is applicable from  its creation point
  to evaluation points.

Closure  variables  get  their  liveness values  through  a  chain  of
dependences beginning  at a  variable at an  evaluation point.   As an
example,  in  Figure~\ref{fig:lv-closure},   a  dependence  chain  for
$\px_{\mathit{\pi}_1}$  begins   with  the   variable  $\pz$   at  the
evaluation  point $\psi_3$.  The  variable \pz\  returned at  $\psi_3$
depends on  \py\ through  the expression  $(+~\py~1)$.  $\py$  in turn
depends  on   the  closure  variable   $\px_{\mathit{\pi}_1}$  through
$(\length~\px_{\mathit{\pi}_1})$.  We denote this chain of dependences
as   $\lbrack  \psi_3\!\!\!:\!\!\!\pz   \leftarrow  (+~\py~1),\,   \py
\leftarrow (\length~\px_{\mathit{\pi}_1})\rbrack$.  Indeed, the chains
of  closures  in  the  heap  are a  runtime  representation  of  these
dependences.   Since  $\pz$  is  evaluated  at  $\psi_3$  due  to  the
expression $\SRETURN~\pz$,  the demand made by  the calling context(s)
of  $f$ places  a demand  on  $\pz$ which  will impart  a liveness  to
$\px_{\pi_1}$.  Other dependence chains which result in a liveness for
$\px_{\pi_1}$    are    $\lbrack    \psi_1\!\!\!:\!\!\py    \leftarrow
(\length~\px_{\pi_1})\rbrack$   and  $\lbrack   \psi_2\!\!\!:\!\!\!\pw
\leftarrow          (/~\pu~\py),\,\,           \py          \leftarrow
(\length~\px_{\pi_1})\rbrack$.   The  liveness analysis  described  in
this section declares  the liveness of $\px_{\pi_1}$ to be  a union of
the liveness arising  from these dependence chains.  To be  safe, a GC
during evaluation  of $\py$ at  $\psi_1$ has  to use this  liveness to
copy the heap starting from  $\px_{\pi_1}$.  However, notice that if a
GC  takes place  while evaluating  $\pz$  at $\psi_3$,  it can  safely
consider only the liveness arising from the dependence chain\linebreak
$\lbrack \psi_3\!\!\!:\!\!\!\pz \leftarrow (+~\py~1),\, \py \leftarrow
(\length~\px_{\pi_1})\rbrack$.    The    garbage   collection   scheme
described in Section~\ref{sec:GC-scheme} uses a generalization of this
observation  to  dynamically  select   an  evaluation  point  specific
liveness in order to collect more garbage.

Figure~\ref{fig:live-judge}  describes  our  analysis  which  has  two
parts. The  function $\mathit{ref}$,  takes an  application $s$  and a
demand $\sigma$ and returns the incremental liveness generated for the
free variables of $s$ due to  the application.  This will be consulted
during GC  while exploring the heap  starting from the
closure  variables.  The function $\mathcal{L}$  uses  $\mathit{ref}$ to  propagate
liveness across expressions.

In  a   lazy  language,   an  expression   is  not   evaluated  unless
required. Therefore  the null  demand ($\emptyset$) does  not generate
liveness in any of the rules defining $\mathit{ref}$ or $\mathcal{L}$.
A non-null  demand of  $\sigma$ on (\CDR~$x$),  is transformed  to the
liveness $\{x.\epsilon,  x.\acdr\sigma\}$.  In an opposite  sense, the
demand  of  $\acdr\sigma$ on  (\CONS~$y$~$z$)  is  transformed to  the
demand  $\sigma$  on  $z$.   Since \CONS\  does  not  dereference  its
arguments, there  is no $\epsilon$ demand  on $y$ and $z$.   The rules
for (\PRIM~x~y) and (\NULLQ~x) are  similar. Constants do not generate
any liveness.

In case of a  function call, we  use the third  parameter $\Lfonly$
that  represents  the  summaries  of all  functions  in  the  program.
$\Lfonly_{\mathit  f}$  (the summary for a  specific
function $f$) expresses  how the demand $\sigma$ on a  call to $f$ is
transformed into  the liveness of  its parameters at the  beginning of
the  call.  $\Lfonly$  is determined  by the  judgement $\mathit{Prog}
\len  \Lfonly$ using  inference rule  ({\sc live-define}).   This rule
describes  the  fixed-point property  to  be  satisfied by  $\Lfonly$,
namely, the  demand transformation  assumed for  each function  in the
program should  be the  same as  the demand  transformation calculated
from      its      body.       As       we      shall      see      in
Section~\ref{sec:grammar-formulation},  we  convert  the rule  into  a
grammar and  the language generated  by this grammar is  the least
solution satisfying  the rule. We  prefer the least solution  since it
ensures the safe collection of the greatest amount of garbage.

We next  describe the function $\mathcal{L}$  that propagates liveness
across  expressions.   Consider  the $\mathcal{L}$-rules  for  {\LET},
{\SIF}, and {\SRETURN}.  Since the value of $(\SRETURN~x)$ is the
value  of  $x$,  a  demand $\sigma$  on  $(\SRETURN~x)$  gives  a
liveness   of  $\{x.\sigma\}$.    The  liveness   of  the   expression
$(\SIF~x~e_1~e_2)$  is  a union  of  the  liveness of  $e_1$  and
$e_2$. In  addition, since  the condition $x$  is also  evaluated, the
liveness $\{x.\epsilon\}$ is created and  added to the union.  
To  understand  the  liveness rule  for  $\LET~x  \leftarrow~s~\IN~e$,
observe that the value  of $\LET$ is the value of  its body $e$.  Thus
the  liveness environment  $\Lv$ of  $e$ is  calculated for  the given
demand $\sigma$. Since the stack variable $x$ is copied to each of the
closure variable  $x_{\pi}$, the liveness of  $x$ is the union  of the
liveness of the  closure variables.  This liveness,  say $\sigma'$, is
also  the  demand  on  $s$, thus the  liveness  environment
$\mathit{ref}(s,   \sigma',   \Lfonly)$   is  added   to   $\Lv   \cup
\{x.\sigma\}$. Finally, the stack variables corresponding to the free
variables of $s$ are updated and added 
  to  give the  overall liveness environment  for $(\LET~x
\leftarrow~s~\IN~e)$.

\added{  As noted  earlier, $x.\alpha  \in \Lv$  specifies the
  liveness of the reference $\heap\llbracket\rho(x), \alpha\rrbracket$
  only   if   $\alpha$  corresponds   to   a   closure-free  path   in
  \heap\ starting  from $\rho(x)$.  If  this path is intercepted  by a
  closure,  say  $(\CAR~{y_{\pi}})$, then  the  liveness  of the  path
  starting from $\mathit{y_{\pi}}$ is given by $\Lv_\mathit{y_{\pi}}$.
  As we shall see in  Section~\ref{sec:GC-scheme}, the liveness of the
  closure variable  $y_{\pi}$ is recorded  along with the  closure for
  $s$ so that the GC can refer to it during garbage collection.  }

\subsection{Soundness of analysis}  

We shall now present a proof of soundness of the analysis presented in
Section~\ref{sec:liveness-analysis}. \added{It is easy to see that the analysis
correctly identifies the liveness of stack variables. A stack variable
is live
between its introduction through a \LET\  and its last use to
create a closure variable. This is correctly captured by the
\LET\ rule in Figure~\ref{fig:live-judge}.}
Proving soundness  for closure  variables is more complex. Here  are the
ideas behind the proof. 
\begin{enumerate}
\item      We     augment      the      standard     semantics      in
  Figure~\ref{fig:lang-semantics} to  model a GC before  the execution
  of  each \LET\  expression.  \added{Note  that, unlike  eager
    languages,   memory  is   allocated  only   during  execution   of
    \LET\ expressions}. During GC, we track each reference in the root
  set and heap that is declared  dead by our analysis.  Any attempt to
  dereference  such  references  results   in  the  transition  system
  entering a special state denoted \bang.  We call the semantics after
  augmentation, \emph{minefield semantics}.

\item \label{inline1} Assuming that a program enters the \bang\ state,
  we  construct,  through  inline expansion,  a  program
  without function  calls which has  the same minefield  behavior. The
  final step shows that no program  without function calls can enter the
  \bang\ state. As a consequence  no program (with or without function
  calls) can enter the \bang\ state.
\end{enumerate}

\begin{figure*}[t!]
\begin{center}\footnotesize
\renewcommand{\arraystretch}{1.2}
\begin{tabular}{|@{}c@{}|@{}c@{}|@{}c@{}|}
\hline
Premise & Transition & Rule name \\ 
\hline
\hline 
\makecell[t]{ $\rho(x) \mbox{ is } \bot$} & $\rho, S,
  \heap, (\CAR~x), \cred{\sigma} \rightsquigarrow \bang$ &
{\sc car-bang} 
\\
\hline
\makecell[t]{$\heap(\rho(x)) \mbox{ is } (\langle s, \rho'\rangle, d)$} & \makecell[t]{$\rho,\, S,\,  \heap,\,
(\CAR~x), \cred{\sigma} \rightsquigarrow$ $ \rho', \,(\rho, addr(\langle
s, \rho'\rangle), (\CAR~x),\cred{\sigma} )\!:\!S, \heap,\, s, \, \cred{\sigma}$ }     &
{\sc car-1-clo} 
\\
\hline
\makecell[t]{ $\heap(\rho(x)) \mbox{ is } \langle s, \rho'\rangle$} & 
\makecell[t]{ $\rho, S, \heap, (\CAR~x), \cred{\sigma}
  \rightsquigarrow$ 
  $\rho', (\rho, \rho(x), (\CAR~x), \cred{\sigma})\!:\!S, \heap, s,\cred{\renewcommand{\arraystretch}{1}\left\{
    \begin{array}{@{}lr@{}}
      {\emptyset}&\mbox{if } \sigma = \emptyset \\       
      {\{\epsilon\} \cup \acar\sigma }&\mbox{otherwise}
    \end{array}\right.
  }$   }      
&
{\sc car-clo}\\
\hline
\makecell[t]{$\cred{GC(\rho_1, S_1, \heap_1, (\LET~x\leftarrow
    s~\IN~e), \sigma) = (\rho, S, \heap)}$,\\$\ell$ is a new location}& \makecell[t]{$\rho, S, \heap, (\LET~x\leftarrow
  s~\IN~e), \cred{\sigma}$  $ \rightsquigarrow \rho\oplus[x
    \mapsto \ell], S, \heap[\ell := \langle s, \lfloor\rho\rfloor_{FV(s)}, \sigma_x\rangle], e, \cred{\sigma}$ \\
    where $\sigma_x\ =\  \lfloor\mathcal{L}(e,\sigma,\Lfonly)\rfloor_{\{x\}}$} &
{\sc let} \\ 
\hline
\end{tabular}
\end{center}
\caption{Minefield  semantics for  \CAR\   and \LET.
  \warning{Add a note mentioning that we handle the issue of requiring
    a      specific     demand      by      introducing     a      new
    symbol}\label{fig:minefield-semantics-for-some}}
\end{figure*}
\noindent To set up the minefield semantics, we follow these steps:
\begin{enumerate}

\item  We  extend  the  abstract  machine  state  $\rho,S,\heap,e$  to
  $\rho,S,\heap,e,\sigma$.   We call  such a  state a  \emph{minefield
  state}.  Here $\sigma$  is the ``dynamic'' demand  on the expression
  $e$, that arises from the actual sequence of function calls that led
  to  the  evaluation of  $e$.\footnote{The  static  liveness that  is
    consulted during actual GC  is computed from an over-approximation
    of this demand.  Thus the soundness  result on the modeled GC will
    also apply for the actual GC.} The demand for the initial state is
  $\sigma_{\mathit  {all}}$,  and each  $\rightsquigarrow$  transition
  transforms   the  demand   according  to   the  liveness   rules  of
  Section~\ref{sec:liveness-analysis}.      The     information     in
  continuation frames  on the stack  $S$ are also  similarly augmented
  with their demands.  Thus a stacked entry now takes the form $(\rho,
  \ell, e, \sigma)$.

\item The closure created by a let expression is now a 3-tuple
  $\langle s,  \rho, \sigma \rangle$, where $\sigma$ represents the
  demand on the closure. 

\item  $GC(\rho,  S,  \heap,  e, \sigma)$ models  a  liveness-based  garbage
  collection that returns $(\rho', S', \heap')$. The changes in
  $\rho, S$ and  $\heap$ are due
  to non-live references being replaced by $\bot$.  This simulates the
  act of  garbage collecting  the cells
  pointed to by these references  during an actual garbage collection.
  To do so, $GC(\ldots)$ needs to consider the following environments:  (1) the  environment in  the
  current state, (2) the  environment in each of the stacked  continuations and (3)
  the environment  in each of  the closures in  the heap.
  
 \begin{enumerate} 
\added{

 \item \label{env}  For each of  these environments, $GC(\ldots)$  calculates a
   liveness  environment $\Lv$  using  the  corresponding  $e$  (or  $s$)  and
   $\sigma$.
 \item 
   For each
    location $\ell$,  $GC(\ldots)$  sets $\heap(\ell)$  to $\bot$ iff for each
    environment $\rho$ above, for each $x \in domain(\rho)$, and each forward
    access
    path $\alpha$, it is \emph{not} the case that $x.\alpha \in \Lv$ and $\heap\llbracket\rho(x), \alpha
    \rrbracket = \ell$. 
}

  \end{enumerate}
\end{enumerate}
Figure ~\ref{fig:minefield-semantics-for-some} shows some of the
minefield rules. As mentioned earlier, the transition for a \LET\ 
is preceded  by $GC(\ldots)$.  The details of the transition for the {\sc
  car-clo}  rule   is  also shown.  If    an
earlier  call to  $GC(\ldots)$  results  in $\rho(x)$  being
bound  to  $\bot$,  then  the  $\rightsquigarrow$  step
enters the  \bang\ state ({\sc  car-bang}). Otherwise
the transition is similar to the earlier {\sc car-clo}
rule. The remaining rules for minefield semantics are given in Appendix~\ref{sec:minefield}.

Consider a trace  of a minefield execution of a  program $p$, possibly
ending in  a \bang\ state. We  can construct a call-tree  based on the
trace in  which each node represents  a function that was  called (but
did not necessarily  return because of a \bang).  Assume  that each of
the nodes of  the tree is also annotated with  the program point where
the corresponding call  was invoked.  This tree can be  used to inline
function calls in a hierarchical fashion.  The details of the inlining
can be found in~\cite{asati14lgc}.
For  a
call-less program,  the initial state of  the minefield
semantics  is  assumed  to  be  $([\;]_\rho,(\rho_\mathit{init},
\ans,      (\print~\ans)):[\;]_{S}     ,      [\;]_{\heap},
e_{\mainpgm})$.

\subsection{Soundness result}
We first need an auxiliary  result about minefield semantics. Consider
a trace of  a minefield execution.  For every  minefield state $(\rho,
S,   \heap,  e(s),   \sigma)$   that   appears  on   the   LHS  of   a
$\rightsquigarrow$ step, the demand $\sigma$ on the expression $e$ (or
application $s$) is  non-null.  This can be proved by  an induction on
the number  of steps leading  to the  minefield state.  The  base step
holds   because  the   demand   $\sigma_{\mathit{all}}$   on  $(\mainpgm)$   is
non-null. For the inductive step we  observe that for each step of the
minefield semantics, if the demand $\sigma$  on the LHS of a minefield
step  is non-null,  the  demand  on the  RHS  is  a transformation  of
$\sigma$ (for example $(\clazy\cup\acar)\sigma$) which is also non-null.

Note that our proofs will be for a single round of minefield execution
i.e.,  the evaluation of  $(\mainpgm)$  to   its  WHNF  driven  by   the  printing  mechanism~(Section
\ref{sec:semantics}).  With  minor variations,  the proof will  also be
applicable for subsequent evaluations initiated by $\print$.

\begin{lemma}
\label{lemma:call-less-cannot-go-bang}
Consider the minefield execution of  a program without function calls.
Such a program cannot enter the \bang\ state.
\end{lemma}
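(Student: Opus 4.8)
The plan is to render the \bang\ transitions unreachable by an invariant argument. I will show that along any trace of a call-less program, every reference that the remainder of the execution actually dereferences is marked live by the liveness environment that $GC(\ldots)$ computes from the current demand annotations. Since the modeled collection sets a location to $\bot$ only when \emph{no} live access path reaches it, no reference that is used later is ever nullified; hence the premise $\rho(x)=\bot$ of \textsc{car-bang} and of the analogous bang rules can never be met, and the machine cannot enter \bang.

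First I would make the demand--use correspondence precise. For a state $(\rho, S, \heap, e, \sigma)$ on a trace, call the \emph{realized exploration} of $e$ the set of access paths actually chased (along closure-free heap paths, truncated wherever a closure is met) on the value that $e$ produces. The invariant is: in every reachable state, $\sigma$ contains the realized exploration of $e$; the demand in each stacked frame $(\rho',\ell,e',\sigma')$ contains that of $e'$; and the demand $\sigma''$ in each heap closure $\langle s,\rho',\sigma''\rangle$ contains that of the value obtained by forcing it. I would prove this by induction on the remaining steps. The base case is the initial state, whose demand $\sigma_{\mathit{all}}$ is the full path set; here I also invoke the already-established auxiliary fact that every left-hand-side demand is non-null, which prevents the $\mathit{ref}$ and $\mathcal{L}$ clauses from collapsing to $\emptyset$. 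For the inductive step, each non-collection rule rewrites $\sigma$ exactly as the matching clause of $\mathit{ref}$ or $\mathcal{L}$ dictates --- e.g. \textsc{car-clo} installs $\{\epsilon\}\cup\acar\sigma$, matching the $\mathit{ref}$ equation for $(\CAR~x)$ --- so the realized explorations transform consistently with the demands.

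The step that carries the weight is \textsc{let}, the only one preceded by $GC(\ldots)$ and the one that creates closures. Two things must be checked. For soundness of the collection, the invariant in force just before it guarantees that any location the rest of the run will dereference is reached by some live $x.\alpha$ in one of the environments $GC$ inspects (current, stacked, and closure environments); the nullification test therefore spares that location, so post-collection dereferences still find real data. To re-establish the invariant afterwards, the residual demand $\sigma$ on the body $e$ and the demand $\sigma_x=\lfloor\mathcal{L}(e,\sigma,\Lfonly)\rfloor_{\{x\}}$ deposited in the new closure must jointly dominate the realized explorations of $e$ and of the closure's eventual value; this uses that the stack variable $x$ only aliases its closure-variable copies, so that the liveness routed to them through $\sigma'=\bigcup_{\pi}\Lv(x_{\pi})$ tracks exactly the uses that flow through the closure.

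With the invariant established, the conclusion is immediate: every dereferencing transition (\textsc{car-select}, \textsc{car-clo}, and their \CDR, \NULLQ, \PRIM\ and \SIF\ counterparts) follows a path in the realized exploration, hence in the live set, hence reads a reference that every preceding collection preserved and that is therefore not $\bot$; so no bang rule applies. I expect the main obstacle to be keeping the closure clause self-sustaining: a demand deposited at closure-creation time must continue to dominate the actual future exploration even though the closure may be forced arbitrarily far away, after several intervening collections, at which point its demand is itself updated (as in \textsc{car-clo}). Managing this, together with the aliasing between a stack variable and its closure copies and the requirement that a location shared by several access paths survive whenever any one of them is live, is the delicate part of closing the induction.
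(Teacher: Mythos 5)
Your proposal is sound in strategy and rests on the same two pillars as the paper's proof --- the auxiliary non-null-demand fact and monotonicity of the modeled collection in the demand --- but it is organized in a genuinely different way. The paper inducts on the number of steps with the purely local statement ``the next transition cannot enter \bang'': at a dereference that might hit a $\bot$, it notes that the demand under which the offending $GC(\ldots)$ ran \emph{included} the current demand $\sigma$, so it suffices to check that a GC run with the current non-null $\sigma$ would spare exactly the references the current rule touches (for {\sc car-clo}, the liveness of $x$ contains $\epsilon$; for {\sc car-1-clo}, it contains $\epsilon$ and $\acar\epsilon$). This backward localization means the paper never formalizes future uses at all; the price is that the demand-inclusion claim is asserted rather than proved. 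You instead maintain a forward invariant that demands dominate the ``realized exploration'' of the current expression, of every stacked frame, and of every heap closure, and derive soundness of each collection from it. That buys a statement much closer to a full semantic correctness property of the analysis, but it is heavier: you must formalize the whole-trace notion of realized exploration and discharge the closure-update and aliasing bookkeeping that you yourself flag as delicate, exactly the points the paper's localization sidesteps. One caution if you carry the plan out: as stated, your invariant speaks only of demands on \emph{values}, whereas the nullification test in $GC(\ldots)$ consults liveness environments $\Lv = \mathcal{L}(e,\sigma,\Lfonly)$ and $\mathit{ref}$ over variable-rooted access paths $x.\alpha$; your collection-soundness step quietly uses this stronger, variable-rooted form, so the invariant must be restated in those terms for the induction to close.
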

\begin{proof}
Consider a  state  $(\rho,  S,  \heap,  e, \sigma)$  in  the
minefield execution of a program.  We  show by induction on the number $n$
of $\rightsquigarrow$ steps  leading to  this state  that the next  transition cannot
enter a  \bang\ state.   When $n$  is 0,  the state  is $([\;]_\rho,\,
(\rho_\mathit{init},   \,  \ell_{\ans}   ,\,  (\print~\ans)):[\;]_{S},
[\;]_{\heap},\, e_{\mainpgm})$.   Since the call  to $GC(\ldots)$ in  this state
does  nothing,  we  just  have to  show  that  the  $\rightsquigarrow$
transition cannot  enter a  \bang\ state.  Since  our programs  are in
ANF, $e_{\mainpgm}$ can only be a $\LET$ expression.  A {\sc let} step
does not involve dereferencing, and thus cannot result in a \bang.

For the inductive step, we shall show that none of the minefield steps
that involves dereferencing  results in a \bang.  These  are the steps
which   have   a   $\heap(\rho(...))$   in   the   premise.    Now   a
$\rightsquigarrow$ step can go \bang\ because it dereferences a $\bot$
inserted by an earlier $GC(\ldots)$.   However the demand $\sigma'$  on basis of
which the $GC(\ldots)$ would have inserted a  $\bot$ would have
included the  current demand $\sigma$.     Thus it is
enough to  show that the $\rightsquigarrow$ step would not lead to a
\bang, even if $GC(\ldots)$ had been done with the current demand
$\sigma$.

We consider the rules for \CAR\  only.  
The rest  of the rules involve similar reasoning. For the {\sc
  car-clo}  rule in  the state  $\rho, S,  \heap, (\CAR~x),
\sigma$, we  know that $\sigma$ is  non-null. Therefore
the  liveness  of  $x$  includes  $\epsilon$,  and  the
dereferencing $\heap(\rho(x))$ will go without \bang.

For the  {\sc car-1-clo}  rule, observe that  there are
two dereferences.  First $x$  is dereferenced to  get a
cons  cell  and then  the  head  of  the cons  cell  is
dereferenced  to  obtain  a  closure.   If  the  demand
$\sigma$ on  $(\CAR~x)$ is non-null, then  the liveness
of    $x$   will    include    both   $\epsilon$    and
$\acar\epsilon$,  and  a  GC with  this  liveness  will
neither bind $x$ to a  $\bot$, nor insert $\bot$ at the
first   component  of   the   cons   cell.  Thus   both
dereferences    can take place without 
$\rightsquigarrow$ entering the   \bang\ state.
\end{proof} 

Now we are ready to prove the main soundness result.

\begin{theorem}
The  minefield  execution of  no  program  can enter  a
\bang\ state.
\end{theorem}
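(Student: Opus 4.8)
The plan is to argue by contradiction, reducing the general case to the call-less case already settled in Lemma~\ref{lemma:call-less-cannot-go-bang}. Suppose, for contradiction, that some program $p$---possibly containing function calls---has a minefield execution that reaches the \bang\ state. I would take the finite prefix of this trace that ends with the offending \bang\ step and extract from it the \emph{call-tree} described just before the soundness result: each node is a function invocation, annotated with the program point of the call, and the parent/child relation records which call was made from within which activation. Since the prefix is finite, this call-tree is finite.

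Next I would perform the hierarchical inline expansion of $p$ along this call-tree, following the construction of~\cite{asati14lgc}, to obtain a program $p'$ that contains no function calls. The key claim is a \emph{simulation}: the minefield execution of $p'$ mirrors, step for step, the execution of $p$ on the given trace, and in particular $p'$ also reaches \bang. To establish this I would set up a correspondence between the states of the two executions and verify that it is preserved by every $\rightsquigarrow$ rule. The crucial point is that inlining is faithful to the demand-propagation machinery of Figure~\ref{fig:live-judge}: at a call site $(f\,\myvec{x})$ the original program consults the summary $\Lf{f}{i}{\sigma}$, whereas after inlining the body $e_f$ appears literally in place, so the demand on each argument is computed directly by $\mathcal{L}(e_f,\sigma,\Lfonly)$. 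The defining fixed-point equation ({\sc live-define}) guarantees that these two quantities coincide, so the same demands, and hence the same liveness environments, arise at corresponding program points in $p$ and $p'$. Consequently every $GC(\ldots)$ call inserts $\bot$ in the same places (up to the state correspondence), and every dereference that goes \bang\ in $p$ goes \bang\ in $p'$.

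Having transferred the \bang\ behavior to the call-less program $p'$, I would invoke Lemma~\ref{lemma:call-less-cannot-go-bang}, which asserts that no program without function calls can enter \bang. This contradicts the conclusion that $p'$ reaches \bang, and the contradiction refutes the assumption that $p$ could reach \bang. Since $p$ was arbitrary, no program can enter the \bang\ state.

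I expect the main obstacle to be the simulation argument itself---specifically, proving that inlining preserves minefield behavior exactly. Two subtleties deserve care. First, closures are now $3$-tuples carrying their own demand $\sigma$, so the inlined program must reproduce the same closure structure in order that evaluation at the {\sc car-clo}, {\sc if-clo}, and {\sc return-clo} points transform demands identically; this is precisely where the coincidence of $\Lf{f}{i}{\sigma}$ with the directly computed liveness is indispensable. Second, the renaming of variables forced by inlining (needed to preserve the distinctness-of-variables invariant assumed in Section~\ref{sec:defs}) must be shown not to disturb the liveness environments or the heap-reference relation $\heap\llbracket\rho(x),\alpha\rrbracket$ consulted by $GC(\ldots)$. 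Once the state correspondence is defined so as to absorb this renaming and the different stack shapes (the inlined execution has no {\sc funcall} frames), the per-rule verification is routine, and the auxiliary non-nullness result quoted above ensures that the demands stay non-null throughout, so that corresponding dereferences remain well defined.
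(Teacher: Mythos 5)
Your proposal is correct and takes essentially the same approach as the paper's own proof: assume for contradiction that a program reaches \bang, build the call-tree from the offending trace, inline hierarchically along it (following~\cite{asati14lgc}) to obtain a call-less program whose minefield execution matches the original, and then contradict Lemma~\ref{lemma:call-less-cannot-go-bang}. The only difference is one of detail: the paper simply asserts that the two executions are ``identical except for change of variable names,'' deferring the inlining machinery to the cited work, whereas you additionally sketch the simulation argument (via the ({\sc live-define}) fixed point equating $\Lf{f}{i}{\sigma}$ with the directly computed liveness of the inlined body), which strengthens rather than departs from the paper's argument.
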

 
\begin{proof}
Assume to  the contrary that  a program $P$  enters the
\bang\  state.  We can  transform  $P$  to a  call-less
program $P'$ such that  the minefield executions of $P$
and $P'$  are identical  except for change  of variable
names.             However,                 by
Lemma~\ref{lemma:call-less-cannot-go-bang} we know that
$P'$,   a   call-less   program,   cannot   enter   the
\bang\  state.  Therefore   $P$  also  cannot  enter  the
\bang\ state.
\end{proof}
 

\section{Towards a  computable form of liveness}\label{sec:computing}
The analysis in Section~\ref{sec:liveness} is fully context-sensitive, 
describing the  liveness sets  in a function body in
terms of a symbolic demand $\sigma$  and \Lfonly. However, we have yet to
describe (i) how to obtain demand transformers \Lfonly\ from the rule
{\sc live-define} , and (ii) how to compute the concrete demand $\sigma$ on
each function. To do so, we first need to modify the liveness rules
to a slightly different form.

\paragraph{Symbolic representation of operations:}
The   $\mathit{ref}$   rule   for   \CONS,   shown   in
Figure~\ref{fig:live-judge}, requires us  to remove the
leading  \acar\ and  \acdr\  from the  access paths  in
$\sigma$.  Similarly, the rules  for \CAR, \CDR, \PRIM,
\NULLQ, and \SIF\ require  us to return $\emptyset$, if
$\sigma$      itself       is      $\emptyset$      and
$\lbrace\epsilon\rbrace$  otherwise.  To  realize these
rules  $\sigma$   needs  to  be  known.   This  creates
difficulties  since  we  want to  solve  the  equations
arising from liveness symbolically.

The solution is to  also treat the operations mentioned
above  symbolically.  We  introduce three  new symbols:
\bcar, \bcdr,  \clazy.  These symbols are  defined as a
relation  $\hookrightarrow$  between   sets  of  access
paths:
\begin{align*}
  &\bcar\sigma \hookrightarrow \sigma' \mbox{ where } \sigma' = \{\alpha \mid \acar\alpha \in \sigma\}\\
  &\bcdr\sigma \hookrightarrow \sigma' \mbox{ where } \sigma' = \{\alpha \mid \acdr\alpha \in \sigma\}
\end{align*}
Thus \bcar\ selects those entries in $\sigma$ that have leading \acar, and removes the leading \acar\ from them.
The symbol \clazy\ reduces the set of strings following it to a set containing only $\epsilon$. It filters out, however, the empty set of strings.
\begin{align*}
  \clazy\sigma \hookrightarrow & \left\{ 
  \begin{array}{ll}
    \emptyset&\mbox{if}~\sigma = \emptyset\\
    \{\epsilon\} & \mbox{otherwise}
  \end{array}\right.
\end{align*}
We can  now rewrite the \CONS\  and the \CAR\  rules of $\mathit{ref}$
as:
\begin{align*}
&\mathit{ref\/}((\CONS~x~y),\sigma,\Lfonly)
= x.\bcar\sigma \cup y.\bcdr\sigma  \label{eqn:mod-cons},~
\mbox{and} \\
&\mathit{ref\/}((\CAR~x),\sigma,\Lfonly)
          =   x.\clazy\sigma \cup x.0\sigma
\end{align*}
and the \Lfunonly\ rule
for \SIF\ as:
\begin{align*}
\mathcal{L}((\SIF~x~e_1~e_2),\sigma,\Lfonly) =
                    &\mathcal{L}(e_1,\sigma,\Lfonly)~\cup
        \mathcal{L}(e_2,\sigma,\Lfonly)~\cup
          \{x.\clazy \sigma\}
\end{align*}
The rules for  \CDR, \PRIM\ and \NULLQ\ are also
modified similarly.

When   there  are   multiple   occurrences  of   \bcar,
\bcdr\  and \clazy,  $\hookrightarrow$ is  applied from
right  to left.   The reflexive  transitive closure  of
$\hookrightarrow$      will      be     denoted      as
$\stackrel{*}{\hookrightarrow}$.      The     following
proposition  relates  the  original  and  the  modified
liveness rules. 
\begin{proposition}
Assume  that  a  liveness   computation  based  on  the
original  set  of  rules  gives  the  liveness  of  the
variable  $x$   as $\sigma$
(symbolically,  $\Lanv{x}{}=  \sigma$).  Further,  let
$\Lanv{x}{}= \sigma'$ when the modified rules are used
instead      of      \Lfunonly.      Then      $\sigma'
\stackrel{*}{\hookrightarrow} \sigma$.
\end{proposition}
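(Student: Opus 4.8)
The plan is to prove a stronger, context-carrying statement by structural induction on expressions and applications, namely: for every $e$ (resp.\ $s$) and every pair of demands $\sigma_m$ (computed with the modified rules) and $\sigma_o$ (computed with the original rules) with $\sigma_m \stackrel{*}{\hookrightarrow} \sigma_o$, the modified $\mathcal{L}(e,\sigma_m,\Lfonly)$ reduces, variable by variable, to the original $\mathcal{L}(e,\sigma_o,\Lfonly)$, and likewise for $\mathit{ref}$. The proposition is then the special case in which we read off the entry for $x$, i.e.\ $\Lanv{x}{}$. Before the induction I would record two easy facts about $\stackrel{*}{\hookrightarrow}$: (i) it is a \emph{congruence} --- if $\tau \stackrel{*}{\hookrightarrow} \tau'$ then $\bcar\tau \stackrel{*}{\hookrightarrow} \bcar\tau'$, $\bcdr\tau \stackrel{*}{\hookrightarrow} \bcdr\tau'$, $\clazy\tau \stackrel{*}{\hookrightarrow} \clazy\tau'$, $\{x.\tau\} \stackrel{*}{\hookrightarrow} \{x.\tau'\}$, and unions reduce componentwise --- so that a redex may be contracted inside a larger symbolic term; and (ii) once the argument of a symbolic operator is concrete, a single $\hookrightarrow$ step produces exactly the set that the corresponding original rule computes (e.g.\ $\bcar\sigma_o \hookrightarrow \{\alpha \mid \acar\alpha\in\sigma_o\}$, and $\clazy\sigma_o$ yields $\{\epsilon\}$ or $\emptyset$ according as $\sigma_o$ is non-empty or empty). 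Fact (ii) is precisely how the modified rules were designed, so it amounts to matching definitions.

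With these in hand the structural cases are largely routine. For $\SRETURN~x$ both rules return $\Lv_\emptyset \cup \{x.\sigma\}$, so congruence on the demand suffices. For the operation applications ($\CONS$, $\CAR$, $\CDR$, $\PRIM$, $\NULLQ$) and for the condition of $\SIF$, I would first use congruence to reduce the inner demand $\sigma_m \stackrel{*}{\hookrightarrow} \sigma_o$ inside the symbolic operator and then fire the single step from fact (ii); this reproduces the original right-hand side, including the ``otherwise $\emptyset$'' branches, which are exactly the effect of $\clazy$ collapsing a dead ($\emptyset$) demand. The one point needing care is that these operators are emptiness-sensitive: $\clazy\tau$ cannot be contracted until $\tau$ is concrete, since its value differs on $\emptyset$ and on non-empty sets. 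This is exactly why the paper stipulates that $\hookrightarrow$ be applied from right to left, and it is what justifies reducing the innermost demand first in every case. For $\SIF$ and $\LET$ the recursive subexpressions are handled by the induction hypothesis; in the $\LET~x\leftarrow \pi:s~\IN~e$ case I would apply the hypothesis to $e$ to get a pointwise $\Lv_m \stackrel{*}{\hookrightarrow}\Lv_o$, deduce $\sigma'_m = \bigcup_\pi \Lv_m(x_\pi) \stackrel{*}{\hookrightarrow} \bigcup_\pi \Lv_o(x_\pi) = \sigma'_o$ by the union part of congruence, and then invoke the hypothesis on the subterm $s$ with the related demand $\sigma'_m \stackrel{*}{\hookrightarrow}\sigma'_o$, after which the remaining $\mathit{ref}$ and free-variable update terms reduce by congruence.

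The genuine difficulty is the function-application case, since $\mathit{ref}((f~\myvec{x}),\sigma,\Lfonly)=\bigcup_i x_i.\Lf{f}{i}{\sigma}$ appeals to the summaries, and the summaries are themselves the least solution of the mutually recursive system {\sc live-define}, where $\Lf{f}{i}{\sigma}$ is just the entry for the formal $z_i$ in $\mathcal{L}(e_f,\sigma,\Lfonly)$; so the structural induction alone is circular. I would break the circularity by layering it over the Kleene approximants of the two fixpoints. Writing $\Lfonly$ as the limit of $\Lfonly^{(0)} \sqsubseteq \Lfonly^{(1)} \sqsubseteq \cdots$, where $\Lfonly^{(0)}$ maps every demand to $\emptyset$ and $\Lfonly^{(k+1)}$ applies the rules once to every function body, I would prove by an outer induction on $k$ that, for each $f$ and $i$, the level-$k$ modified summary applied to $\sigma_m$ reduces to the level-$k$ original summary applied to $\sigma_o$ whenever $\sigma_m \stackrel{*}{\hookrightarrow}\sigma_o$; the base case is $\emptyset \stackrel{*}{\hookrightarrow}\emptyset$, and the step runs the structural induction of the previous paragraph on each $e_f$, using the level-$k$ summary correspondence precisely at the function-call leaves to obtain the level-$(k+1)$ one. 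The main obstacle I anticipate is passing to the limit: because $\stackrel{*}{\hookrightarrow}$ is a whole-set reduction while the concrete liveness is an increasing union of approximants, I must check that the reductions assembled at each level compose into a single reduction of the union. I expect the cleanest argument to work at the granularity of an individual access path --- showing that each concrete $\alpha\in\sigma$ is the normal form of some symbolic term of $\sigma'$ with a finite derivation depth, and conversely that every symbolic term of $\sigma'$ normalises into $\sigma$ --- so that the resulting two-sided containment delivers $\sigma' \stackrel{*}{\hookrightarrow}\sigma$ exactly.
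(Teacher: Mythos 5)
Your proposal is correct, and it is considerably more complete than what the paper itself offers. The paper's entire ``proof'' of this proposition consists of citing \cite{asati14lgc} for an explanation of the modified \CONS\ rule and asserting that the other modified rules hold ``for similar reasons''---that is, the paper establishes only your fact (ii), the local rule-by-rule observation that one $\hookrightarrow$ step on a concrete argument recovers the original rule, and leaves all of the compositional scaffolding implicit. Your proposal supplies exactly the pieces such an argument is missing: the congruence lemma that lets redexes be contracted inside larger symbolic terms (with the correct caveat that \clazy\ is emptiness-sensitive, so the right-to-left discipline forces the inner demand to be reduced first); the strengthened, demand-relating induction hypothesis ($\sigma_m \stackrel{*}{\hookrightarrow} \sigma_o$) threaded through each syntactic construct, which is what makes the ``otherwise $\emptyset$'' branches of the original rules come out right; the Kleene-approximant layering that breaks the circularity through the summaries $\Lf{f}{i}{\sigma}$ defined by {\sc live-define} (the paper's stated preference for the \emph{least} solution is precisely what licenses your $\Lfonly^{(0)}$ base case); and the passage to the limit, which you rightly flag as the delicate step. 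Your per-access-path resolution of that step is sound, because each of the operators \bcar, \bcdr, \clazy\ distributes over unions of demand sets (e.g.\ $\clazy(\sigma_1\cup\sigma_2)$ equals $\clazy\sigma_1 \cup \clazy\sigma_2$), so whole-set reduction can be reassembled from per-string reductions of the approximants. What the paper's terseness buys is brevity and an appeal to the eager-language precedent of \cite{asati14lgc}; what your argument buys is a self-contained proof that confronts the lazy-specific difficulties---the \clazy\ operator and the recursion through function summaries---rather than inheriting their treatment silently from prior work.
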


An  explanation of  why the  proposition holds  for the
modified  \CONS\ rule  is  given in  \cite{asati14lgc}.
The proposition also holds for other modified rules for
similar reasons.

\begin{figure}[t!]
\renewcommand{\arraystretch}{1}
\begin{minipage}{.40\textwidth}
  \small
  \renewcommand{\arraystretch}{1}{
    \begin{uprogram}
      \UFL(\DEFINE\ (\length~\xl)
      \UNL{1}  $\pi_1\!\!:\, $(\LET\ \px\ $\leftarrow $\ (\NULLQ~\xl) \IN
      \UNL{2} \hspace*{.05cm} $\pi_2\!\!:\,$(\SIF\
      $\psi_1\!\!:\,$ \px
      \UNL{3} \hspace*{.27cm} $\pi_3\!\!:\,
      $(\LET\ \pv\ $\leftarrow 0$\ \IN \hspace*{1.4mm}$\pi_4\!\!:\,(\SRETURN~\psi_2:\pv)$
      \UNL{3} \hspace*{.29cm}    $\pi_5\!\!:\, $(\LET~\pu\
      $\leftarrow$  (\CDR~\xl)  \IN
      \UNL{4} \hspace*{.34cm}   $\pi_6\!\!:\, $(\LET~\py\
      $\leftarrow$  (\length~\pu)  \IN
      \UNL{5} \hspace*{.34cm} $\pi_7\!\!:\,
      $(\LET~\pz\ $\leftarrow$ (+~1~\py)\ \IN \hspace*{1.4mm} $\pi_8\!\!:\,
      (\SRETURN~\psi_3:\pz)$)))))))
  \end{uprogram}}
  \renewcommand{\arraystretch}{1}{
    \begin{uprogram}
      \UFL $(\DEFINE\ (\main)$
      \UNL{1} \!\!$\pi_9\!\!:\, (\LET\  \pa\  \leftarrow$
      (
            \cred{\mbox{a BIG closure}}
      ) \IN  
      \UNL{2} \!\!$\pi_{10}\!\!:\, (\LET\  \pb\  \leftarrow ($+$\ \pa\ \acdr)$
      \IN
      \UNL{3}   \hspace*{.05cm}$\pi_{11}\!\!:\,      $ (\LET\ \pc\
      $\leftarrow  (\CONS\ \pb\ \NIL)$ \IN
      \UNL{4}   \hspace*{.15cm}    $\pi_{12}\!\!:\,
      $(\LET\ \pw\  $\leftarrow  (\length\ \pc)$ \IN  \hspace*{1.4mm}  $\pi_{13}\!\!:\,
      (\SRETURN~\psi_4:\pw)))))$
  \end{uprogram}}
\end{minipage}
\caption{An example program}\label{fig:mot-example2-a}
\end{figure}

\paragraph{Computing function summaries $\Lfonly_{\mathit f}$:}
Given a  function $\mathit{f}$, we now  describe how to
generate  equations   for  the   demand  transformation
\Lfonly$_\mathit{f}$.        The       program       in
Figure~\ref{fig:mot-example2-a}   serves  as   a  running
example.  Starting with a  symbolic demand $\sigma$, we
determine  \Lfun{e_{\mathit{f}}}{\sigma}{\Lfonly}.   In
particular, we  consider $\Lanv{x_i}{}$,  the liveness
of the $i^{\mbox{\footnotesize th}}$ parameter $x_i$.   By   the   rule   {\sc
  live-define},   this   should    be   the   same   as
$\Lf{f}{i}{\sigma}$. Applying this to \length, we have:
$$
 \Lf{\length}{1}{\sigma} = \Lanv{\xl}{} = \clazy\sigma \cup \acdr\Lf{\length}{1}{\clazy\sigma}
  \cup \clazy\Lf{\length}{1}{\clazy\sigma}
$$

In general, the equations for \Lfonly\ are recursive as in the case of
\Lfonly$_\mathit{f}$. A closed  form solution for $\Lfone{\mathit{f}}$
can be derived by observing that each of the liveness rules modifies a
demand  only by  prefixing it  with symbols  in the  alphabet $\lbrace
\acar, \acdr,\bcar,  \bcdr, \clazy  \rbrace$. Therefore we  can assume
that $\Lf{f}{i}{\sigma}$ has the closed form:
\begin{eqnarray}
\label{eq:LF:DI}
  \Lf{f}{i}{\sigma} = \Df{f}{i}\sigma
\end{eqnarray}
where \Df{f}{i}  are sets of strings  over the alphabet
mentioned above.  Substituting the  guessed form in the
equation   describing    \Lfonly$_{\mathit   f}$,   and
factoring  out   $\sigma$,  we  get  an   equation  for
\Df{f}{i}  that   is  independent  of   $\sigma$.   Any
solution   for   \Df{f}{i}   yields  a   solution   for
\Lfonly$_{\mathit  f}$.   Applied to  \Lfonly$_{\mathit
  \length}$, we get:
  \begin{eqnarray*}
&&  \Lf{\length }{1}{\sigma} = \Df{\length}{1}\sigma,~\mbox{and}\\
&&   \Df{\length}{1} = \clazy \cup \acdr\Df{\length}{1}\clazy
       \cup \clazy\Df{\length}{1}{\clazy}
  \end{eqnarray*}

Note that this equation can also be viewed as a CFG with \{\acdr,
\clazy\} as terminal symbols and \Df{\length}{1} as the sole
non-terminal.

\paragraph{Handling user-defined functions:}
\label{sec:bodylivenessbodies}
To  avoid analyzing  the body  of a  function for  each
call, we calculated the  liveness for the arguments and
the variables in a function  with respect to a symbolic
demand  $\sigma$.   To  get   the  actual  liveness  we
calculate an  over-approximation of the  actual demands
made by  all the  calls and  calculate the  liveness at
each  GC  point  inside  the  function  based  on  this
approximation.  The 0-CFA-style {\em summary demand} is
calculated by  taking a union  of the demands  at every
call site of a function.


For  the  running  example (Figure~\ref{fig:mot-example2-a}),  $\length$  has  calls  from  $\main$  at
$\pi_{12}$ and a recursive call at $\pi_6$. The demands on these calls
are $\epsilon$ and $\clazy\sigma_{\mathit{\length}}$. Thus:
\begin{eqnarray*}
\sigma_{\length}    &=&
 \{\epsilon\}  ~\cup~{\clazy\sigma_{\mathit{\length}}}
\end{eqnarray*}
As examples, the closure variables $\xl_{\mbox{$\pi$}_1}$ and $\xl_{\mbox{$\pi$}_5}$, and the
stack variables $\pl$ and $\pa$ have the following liveness in terms
of  $\sigma_{\length}$:
\begin{align*}
  \Lv_{\xl_{\mbox{$\pi$}_1}} &= \clazy\sigma_{\mathit{\length}} \\
  \Lv_{\xl_{\mbox{$\pi$}_5}} &=  (\acdr\Df{\length}{1}{\clazy} \cup
  \clazy\Df{\length}{1}{\clazy}) \sigma_{\mathit{\length}} \\
  \Lanv{\pl}{} &= (\clazy \cup \acdr\Df{\length}{1}\clazy
  \cup \clazy\Df{\length}{1}{\clazy})\sigma_{\mathit{\length}} \\
  \Lanv{\pa}{} &= \clazy \bcar \Df{\length}{1}\{\epsilon\}
\end{align*}

In  summary, the  equations  generated during the
analysis are:
\begin{enumerate}
\item   For  each   function  $\mathit{f}$,   equations
  defining \Df{f}{i} for use by \Lfonly$_{\mathit f}$.
\item For each function $\mathit{f}$, an equation defining the summary
  demand $\sigma_{\mathit f}$ on $e_f$.
\item   For  each   function  $\mathit{f}$   (including
  $\mainpgm$) an  equation  defining
  liveness for each garbage in  $e_{\mathit f}$.
\end{enumerate}

\paragraph{From liveness sets to context-free grammars:}\label{sec:grammar-formulation}      
The  equations  above can  now  be  re-interpreted as  a  context-free
grammar  (CFG) on  the  alphabet $\lbrace\acar,  \acdr, \bcar,  \bcdr,
\clazy\rbrace$.  Let \var{$X$} denote  the non-terminal for a variable
$X$ occurring on the LHS of the equations generated from the analysis.
We can  think of  the resulting productions  as being  associated with
several grammars, one for  each non-terminal \var{\Lanv{x}{}} regarded
as a start symbol.  As  an example, the grammar for \var{\Lanv{\xl}{}}
and \var{\Lanv{\pa}{}} comprise the following productions:
\begin{eqnarray*}
  \var{\Lanv{\xl}{}}  &\rightarrow& 
  \clazy\var{\sigma_{\length}} \mid \acdr \var{\Df {\length}{1}}{\clazy\var{\sigma_{\length}}}  \\ && \mid
  \clazy\var{\Df{\length}{1}}{\clazy\var{\sigma_{\length}}} \\
  \var{\Df{\length}{1}} &\rightarrow& \clazy \mid
  \acdr\var{\Df{\length}{1}}\clazy
       \mid \clazy\var{\Df{\length}{1}}{\clazy}\\
\langle {\sigma_{\length}} \rangle
&\rightarrow&
\epsilon  \mid \clazy\langle{\sigma_{\length}}\rangle \\
\var{\Lanv{\pa}{}} &\rightarrow& \clazy \bcar \var{\Df{\length}{1}}
\end{eqnarray*}
Other  equations  can   be  converted  similarly.   The
language   generated   by  \var{\Lanv{x}{}},   denoted
$\mathscr{L}(\var{\Lanv{x}{}})$,   is    the   desired
solution  of  \Lanv{x}{}.    
However, recall that the decision problem that we are interested in
during GC is: 
{\em
Let $x.\alpha$ be a forward access path (consisting only
of edges  \acar\ and \acdr\  but not \bcar,  \bcdr\ or
\clazy).       Let      $\mathscr{L}(\var{\Lanv{x}{}})
\stackrel{*}{\hookrightarrow}  \sigma$, where  $\sigma$
consists  of  forward  access  paths  only.  Then  does
$\alpha \in \sigma$?
}
We model  this problem as one  of deciding the
membership of  a CFG augmented  with a
fixed set of unrestricted productions.

\begin{definition}\label{def:specialgrammar}
Consider the  grammar $(N,T,  p_1\cup p_2,S)$  in which
$N$ is  a set  of non-terminals,  $T =  \{\acar, \acdr,
\bcar,  \bcdr,  \clazy,  \$\}$,   $p_1$  is  a  set  of
context-free    productions     that    contains    the
distinguished  production   $S  \rightarrow  \alpha\$$,
$\alpha$ is a  string of grammar symbols  that does not
contain $S$, and $p_2$ is the fixed set of unrestricted
productions    $\bcar\acar    \rightarrow    \epsilon$,
$\bcdr\acdr    \rightarrow   \epsilon$,    $\clazy\acar
\rightarrow \clazy$,  $\clazy\acdr \rightarrow \clazy$,
and $\clazy\$ \rightarrow \$ $.
\end{definition}

We first show that membership problem of the class of grammars in
Definition~\ref{def:specialgrammar} is undecidable. However note that
the grammars may not be necessarily generated from liveness analysis.
\newcommand{\state}{\ensuremath{\mathsf{S}}}
\newcommand{\nont}[2]{\ensuremath{\mathsf{S}_{#1}^{#2}}}  
\begin{lemma}\label{lemma:grammar-undecidable}
Given a grammar    $G$    of   the    kind    described    in
Definition~\ref{def:specialgrammar}  and a forward access path $\alpha$
consisting  of symbols \acar\  and \acdr\  only, the membership problem
$\alpha\$ \in \mathscr{L}(G)$ is undecidable.
\end{lemma}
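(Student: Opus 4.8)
The plan is to prove undecidability by reducing from Post's Correspondence Problem (PCP), exploiting the fact that the fixed productions $p_2$ convert \emph{string equality} into \emph{reducibility to $\epsilon$}. The two rules $\bcar\acar \to \epsilon$ and $\bcdr\acdr \to \epsilon$ behave exactly like the cancellation laws of the polycyclic monoid on two generators: a barred symbol cancels the forward symbol immediately to its right \emph{only} when the two match, and no rule ever cancels two forward symbols, two barred symbols, or a mismatched pair. For a forward string $\gamma = a_1\cdots a_n$ over $\{\acar,\acdr\}$ write $\widetilde{\gamma}$ for its reverse with each \acar\ replaced by \bcar\ and each \acdr\ by \bcdr. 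The single fact driving the reduction is: for forward strings $\gamma,\delta$, the string $\widetilde{\gamma}\,\delta$ rewrites to $\epsilon$ under $p_2$ iff $\gamma=\delta$, and otherwise it gets stuck with leftover symbols and can never be rewritten to $\$$.

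Given a PCP instance with pairs $(u_1,v_1),\dots,(u_k,v_k)$ over an alphabet $\Sigma$, I would fix an injective fixed-length binary block code $h:\Sigma^\ast\to\{\acar,\acdr\}^\ast$, so that $h(x)=h(y)$ iff $x=y$, and build a \emph{linear} context-free set $p_1$ with a fresh nonterminal $M$ and productions $M \to \widetilde{h(v_i)}\,M\,h(u_i)$ and $M \to \widetilde{h(v_i)}\,h(u_i)$ for each $i$, together with the distinguished production $S \to M\,\$$ (so $\alpha=M$, which does not contain $S$, as Definition~\ref{def:specialgrammar} requires). Arranging that $\$$ occurs in no other production, every derivable terminal string has the shape $w\,\$$ with $w\in\{\acar,\acdr,\bcar,\bcdr\}^\ast$; in particular no \clazy\ is ever produced, so the three \clazy-rules of $p_2$ are inert and only the two cancellation rules can fire. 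Unfolding the $M$-productions, the bodies $w$ range over all strings $\widetilde{h(V)}\,h(U)$ where $(U,V)=(u_{i_1}\cdots u_{i_r},\,v_{i_1}\cdots v_{i_r})$ ranges over the concatenation pairs of all nonempty index sequences. By the cancellation fact, such a $w$ rewrites to $\epsilon$ under $p_2$---and hence $\$\in\mathscr{L}(G)$---iff $h(U)=h(V)$, i.e.\ by injectivity of $h$ iff $U=V$, which is precisely a PCP solution. Thus with query $\alpha=\epsilon$ we obtain $\alpha\$=\$\in\mathscr{L}(G)$ iff the instance is solvable.

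The step that needs the most care, and which I expect to be the main obstacle, is the cancellation fact together with the claim that $\$$ is \emph{unreachable} when no index sequence equalizes the two sides. I would establish it by an invariant argument rather than by invoking general confluence: every body $w$ produced by $p_1$ has the \emph{segregated} form $B\,F$ with $B\in\{\bcar,\bcdr\}^\ast$ and $F\in\{\acar,\acdr\}^\ast$, since each $M$-production prepends barred symbols and appends forward symbols. For a segregated string the only position where a $p_2$-rule can apply is the $B$--$F$ boundary, and it applies iff the boundary pair matches; a successful step yields a shorter segregated string, while a mismatched boundary leaves the string irreducible and nonempty. Hence $B\,F$ reaches $\epsilon$ exactly when the boundary matches repeatedly, i.e.\ when $F$ equals the unbar-reverse of $B$, i.e.\ when $\gamma=\delta$; any other body gets stuck with a leftover forward or barred symbol, so $\$$ (which has no such symbol before it) is never derived. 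Combining this with the construction gives a computable map from PCP instances to pairs $(G,\alpha)$ with $\alpha\$\in\mathscr{L}(G)$ iff the instance is solvable, and undecidability of PCP then yields undecidability of the membership problem. (If a non-empty query is preferred, appending a fixed sentinel \acar\ before $\$$ in the distinguished production and asking about $\acar\$$ works identically.)
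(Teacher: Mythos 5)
Your proof is correct, but it takes a genuinely different route from the paper. The paper reduces from the halting problem: it encodes a Turing machine configuration $w_l(\state,c)w_r$ as a sentential form $\overline{w}_l\nont{}{c}w_r\$$, uses context-free productions to nondeterministically simulate head moves, uses the cancellation rules $\bcar\acar \rightarrow \epsilon$, $\bcdr\acdr \rightarrow \epsilon$ to verify the nondeterministic guesses (a wrong guess breaks the invariant irrecoverably), and adds cleanup productions so that $\$$ is derivable exactly when the machine halts. You instead reduce from PCP, using the cancellation rules purely as an equality test between two forward strings: the linear grammar $M \rightarrow \widetilde{h(v_i)}\,M\,h(u_i) \mid \widetilde{h(v_i)}\,h(u_i)$ generates exactly the segregated strings $\widetilde{h(V)}\,h(U)$ over concatenation pairs, and such a string cancels to $\epsilon$ iff $h(U)=h(V)$ iff $U=V$. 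Your construction has a pleasant structural bonus you could make explicit: while $M$ is present, every sentential form is $\beta M \phi\,\$$ with $\beta$ entirely barred and $\phi$ entirely forward, so no $p_2$ rule can fire until $M$ is eliminated; hence $p_1$ and $p_2$ steps cannot interleave, the reduction phase is deterministic, and your segregated-string invariant covers all derivations, avoiding the paper's more delicate ``wrong choice breaks the invariant'' reasoning and the cleanup phase. (One small point worth a parenthetical: unfolding $M$ with derivation order $i_1,\ldots,i_r$ actually yields the pair for the reversed index sequence $i_r,\ldots,i_1$; since reversal is a bijection on nonempty index sequences, your set-level claim stands.) What the paper's heavier TM construction buys is reuse: Lemma~\ref{lemma:grammar-from-analysis-undecidable} replays the same construction inside grammars actually generated by liveness analysis, by exhibiting programs whose analysis produces the TM-simulation productions. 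Your PCP grammar proves Lemma~\ref{lemma:grammar-undecidable} as stated (which explicitly does not require the grammar to come from the analysis), but extending it to Lemma~\ref{lemma:grammar-from-analysis-undecidable} would require a separate argument that productions of the shape $M \rightarrow \widetilde{h(v_i)}\,M\,h(u_i)$ are realizable by the liveness equations of some program.
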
 

\begin{proof}
Given a Turing machine and  an input $w\in (1+0)^*$, we
construct a grammar $G$ such that the machine will halt
on    $w$    if    and   only    if    $\$    \in
\mathscr{L}(G)$. The grammar includes  the fixed set of
unrestricted               productions               in
Definition~\ref{def:specialgrammar}.

We shall represent a Turing machine (TM) configuration 
as $w_l(\state,c)w_r$, where $w_l$ is the string to the
left of the head, $w_r$ is the string to the right, $c$
is the symbol under the head and \state\ is the current
state of  the TM.   For each combination  of state
and symbol  $(\state,c)$, the grammar will  contain the
non-terminal \nont{}{c}. We shall synchronize each move
of the TM to  a derivation  step using  a context
free production, followed, if possible, by a derivation
step using either  $\bcar\acar \rightarrow \epsilon$ or
$\bcdr\acdr   \rightarrow    \epsilon$.    After   each
synchronization,  we  shall   establish  the  following
invariant  relation between  the TM  configuration
and the sentential form:

\begin{quote}
  If    the   configuration    of   the    TM   is
  $w_l(\state,c)w_r$, then the  sentential form will be
  $\overline{w}_l\nont{}{c}\,w_r\$ $,                where
  $\overline{w}_l$ is  the same as $w_l$  but with each
  symbol $d$ in $w_l$ replaced by $\overline{d}$.
\end{quote}

Assume that  the TM starts  in a  state $S_\mathit{init}$ with  a tape
$cw$ and  the head positioned on  the symbol $c$. Then  the sentential
form    corresponding     to    the    initial     configuration    is
$\nont{\mathit{init}}{c}w\$$ (we  can assume that there  is a production
$\nont{}{} \rightarrow  \nont{\mathit{init}}{c}w\$$, where  \nont{}{} is
the start symbol of the  grammar). Further correspondences between the
TM moves and the grammar productions are as follows:

\begin{enumerate}
\item For each transition  $(S_i, c) \rightarrow (S_j,c',L)$, there are
  two  productions $\nont{i}{c}  \rightarrow  \acar \nont{j}{\acar}c'$
  and $\nont{i}{c} \rightarrow \acdr \nont{j}{\acdr}c'$.
\item For each  transition $(S_i, c) \rightarrow (S_j,c',R)$, there
  are two productions
  $\nont{i}{c} \rightarrow \overline{c'} \nont{j}{\acar}\bcar$ and $\nont{i}{c}
  \rightarrow \overline{c'} \nont{j}{\acdr}\bcdr$. 
\end{enumerate}
The idea  behind the  productions is explained  with an
example:  Assume that  the current  sentential form  is
$\bcar\bcdr\nont{i}{\acar}\acar\acar\$$. Also  assume that  the
TM  has  a   transition  $(S_i,\acar)  \rightarrow
(S_j,\acdr, L)$.  Since the next corresponding  step in
the  derivation  has  to  be  done  without  any  prior
knowledge of whether the symbol to the left of the tape
is a \acar\  or a \acdr, two  productions are provided,
and  the  invariant  will  be maintained  only  if  the
production         $\nont{i}{\acar}         \rightarrow
\acdr\nont{j}{\acdr}\acdr$ is chosen  for the next step
in  the  derivation.    This  gives  the  configuration
$\bcar\bcdr\acdr\nont{j}{\acdr}\acdr\acar\acar\$$.
Simplification   with    the   production   $\bcdr\acdr
\rightarrow               \epsilon$              yields
$\bcar\nont{j}{\acdr}\acdr\acar\acar\$$,    which   exactly
corresponds  to   the  changed  configuration   of  the
TM.  Notice  carefully that a wrong  choice breaks
the invariant  and it cannot be  recovered subsequently
by any choice of productions.

After the  TM has  halted, there  are further  ``cleanup'' derivations
that derive  $\epsilon$ only if  the invariant has been  maintained so
far.    For   every   symbol   $c$,  we   introduce   a   non-terminal
$\nont{\mathit{final}}{c}$  where   $\nont{\mathit{final}}{}$  is  the
final state  of the  TM. We add  productions $\nont{\mathit{final}}{c}
\rightarrow         \acar        \nont{\mathit{final}}{c}$         and
$\nont{\mathit{final}}{c} \rightarrow  \acdr \nont{\mathit{final}}{c}$
for cleaning up the \bcar\ and \bcdr\  symbols on the left of the head
and                $\nont{\mathit{final}}{c}               \rightarrow
\nont{\mathit{final}}{c}\bcar$      and      $\nont{\mathit{final}}{c}
\rightarrow \nont{\mathit{final}}{c}\bcdr$ for  cleaning up \acar\ and
\acdr\ on the right of the tape head.  This completes the
reduction.\!
\end{proof}

We now show that the proof can be replayed for the class of grammars
generated from liveness analysis of programs. 

\begin{lemma}\label{lemma:grammar-from-analysis-undecidable}
Given   a    grammar    $G$    of   the    kind    described    in
Definition~\ref{def:specialgrammar}  that  is  generated  by  liveness
analysis  of a  program  and  a forward  access  path $\alpha$,    the
membership problem $\alpha\$ \in \mathscr{L}(G)$ is undecidable.
\end{lemma}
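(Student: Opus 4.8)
The plan is to realise the reduction of Lemma~\ref{lemma:grammar-undecidable} inside a program, so that for a Turing machine $M$ and input $w$ the simulation grammar built there appears---up to a controlled set of extra productions---as the liveness grammar of a constructed first-order program $p_M$, with the designated start non-terminal being the liveness non-terminal of a top-level variable. Each pair $(S_i,c)$ of the simulation would be identified with a demand transformer $\Df{f}{k}$ for some argument position $k$ of a function $f$ of $p_M$, so that the grammar non-terminal $\nont{i}{c}$ becomes $\var{\Df{f}{k}}$. The five terminals $\acar,\acdr,\bcar,\bcdr,\clazy$ are exactly the alphabet already produced by liveness, and I would emit them through the operations that the $\mathit{ref}$ rules attach to them: $\CAR$ prepends $\acar$, $\CDR$ prepends $\acdr$, and the first and second arguments of a $\CONS$ prepend $\bcar$ and $\bcdr$ respectively; the terminator $\$$ plays the role of the symbolic demand that is factored out of every equation, and the initial configuration $\nont{\mathit{init}}{c}w\$$ is encoded by a single use of the top-level variable.

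To realise a production $\nont{i}{c}\to(\text{prefix})\,\nont{j}{d}\,(\text{suffix})$ I would give $f$ a body that applies the prefix operation to its parameter $x_k$, passes the result to the function $g$ that realises $\nont{j}{d}$ at the argument position corresponding to $\Df{g}{m}$, and then consumes $g$'s result with the suffix operation; factoring $\sigma$ out of the resulting equation $\Lanv{x_k}{}=\Df{f}{k}\sigma$ yields precisely that production. Right-move productions, whose prefix and suffix are the barred symbols, come out cleanly because $\CONS$ emits $\bcar$ and $\bcdr$ with nothing attached; left-move productions, whose prefix and suffix are forward symbols, are obtained by a $\CAR$ or $\CDR$ on the parameter (the prefix) together with a $\CAR$ or $\CDR$ on the call's result (the suffix). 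The two productions per move---one for each possible neighbouring tape symbol---are produced by using $x_k$ in two distinct applications, since the \LET\ rule unions the liveness contributed by every use of a variable. The initial and the final-state cleanup productions are realised by the same device.

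The obstacle is that the selectors are not clean: the rule for $(\CAR~x)$ contributes both $x.\acar\sigma$ and $x.\clazy\sigma$, so next to every intended forward-symbol production the grammar necessarily carries a parallel production in which the leading (or trailing) $\acar$ or $\acdr$ is replaced by $\clazy$. Hence the grammar $G$ actually generated is the simulation grammar together with a family of such $\clazy$ detours, and the whole reduction hinges on showing that these detours create no spurious derivation, i.e.\ that $\$\in\mathscr{L}(G)$ still holds exactly when $M$ halts on $w$. I expect this to be the hard part. The argument I would develop uses the fact that a barred symbol can be erased only by the adjacency reductions $\bcar\acar\to\epsilon$ and $\bcdr\acdr\to\epsilon$, each of which requires a forward symbol immediately to the right of the barred one: if a $\clazy$ production is taken where a faithful step needs the forward symbol that cancels the barred symbol recording the tape, the derivation is left with a $\bcar$ (or $\bcdr$) immediately followed by $\clazy$, an adjacency that nothing can repair, because the $p_2$ reductions only contract and never insert a forward symbol between two existing terminals, while $\clazy$ is itself removable only against $\$$. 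Such a derivation strands an uncancellable barred symbol and can never reach $\$$, whereas a faithful derivation using only the clean productions reaches $\$$ precisely when $M$ halts. Making this blocking argument exhaustive---in particular ruling out that a stray $\clazy$ deposited into the right-hand portion of the tape is absorbed by $\clazy\acar\to\clazy$, $\clazy\acdr\to\clazy$ and $\clazy\$\to\$$ during the cleanup phase and still terminates in $\$$---is the step demanding the most care. Once the invariant of Lemma~\ref{lemma:grammar-undecidable} is re-established only for $\clazy$-free derivations, the equivalence $\$\in\mathscr{L}(G)\iff M\text{ halts on }w$ follows and undecidability is inherited from the halting problem.
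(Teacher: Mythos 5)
Your proposal is the paper's own reduction in all essentials: one function per simulation non-terminal, $\CAR$/$\CDR$ emitting forward symbols, $\CONS$ emitting the barred ones, calls standing in for non-terminal occurrences, alternation obtained from multiple uses (the paper joins fragments with $\SIF$, which is the cleaner way to keep the two alternatives' suffixes from entangling), and the start symbol taken to be the liveness non-terminal of a top-level variable. The divergence is in how the unavoidable extra productions are argued harmless. The paper records two structural facts---every production acquires a trailing $\nont{\mathit{all}}{}$ coming from the 0-CFA summary demand, and every production, spurious or not, has exactly one simulation non-terminal on its right-hand side---and then settles soundness in one line: if the TM does not accept, every sentential form retains a non-final simulation non-terminal, so $\$$ is underivable. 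You instead attack the $\clazy$ detours at the terminal level: a $\bcar\clazy$ or $\bcdr\clazy$ adjacency is irreparable because the unrestricted rules only contract, never insert, and $\clazy$ vanishes only against $\$$. This is arguably the more honest treatment, since the paper's one-liner never mentions the $\clazy$ variants at all and silently leans on the invariant-breaking analysis of Lemma~\ref{lemma:grammar-undecidable} (``a wrong choice breaks the invariant and cannot be recovered''), which is exactly what you make explicit; your flagged corner cases (detours taken with an empty left record, degraded input cells that the machine never reads, $\clazy$ absorption during cleanup) are real, and the resolution in each case is that such derivations can reach $\$$ only after a faithful simulation has already reached the halting state, so soundness survives. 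Conversely, you omit the one piece of junk the paper does treat explicitly: the summary-demand suffix $\nont{\mathit{all}}{}$. Your arrangement suffers less from it than the paper's, because you place the simulation in the $\sigma$-free $\Df{f}{k}$ layer so the suffix surfaces only in the start production, but a complete write-up still needs a sentence showing that forward strings generated there (like those produced by $\clazy$ detours) cannot manufacture $\$$ when the machine diverges.
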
 

\begin{proof}
Given a Turing machine and an input string, the  proof   of
Lemma~\ref{lemma:grammar-undecidable} generates a grammar. 
We shall define a function
for   each  non-terminal $\nont{i}{c}$   introduced  in   this grammar
such that the
liveness  analysis  of  the function will  result
in a set of productions that includes the productions for
$\nont{i}{c}$ generated for the proof.
As an example, it  can be verified that the grammar
for the function shown in Figure~\ref{fig:turing-pgm}(b)    includes      the
productions shown in Figure~\ref{fig:turing-pgm}(a). Here $\nont{\mathit{all}}{}$   corresponds   to  the   demand
$\sigma_{\mathit{all}}$.
 
The body of the function corresponds  to the RHS of the productions
for $\nont{i}{c}$. Productions with the same  LHS non-terminal, but
with differing non-terminal on RHS can be generated by 
joining program fragments with $\SIF$.   If  there  are  $n$ such  functions,  then
\mainpgm\ creates  a $n$-way  branch and  inserts a  single call  to a
distinct function in each branch.

\begin{figure}[t]
  \begin{minipage}{.45\columnwidth}
    \begin{align*}
      \nont{i}{\acar} &\rightarrow \acar\nont{j}{\acar}\bcar\nont{\mathit{all}}{} \\
      \nont{i}{\acar} &\rightarrow \clazy\nont{j}{\acar}\bcar\nont{\mathit{all}}{} \\
      \nont{\mathit{all}}{} &\rightarrow \epsilon \mid \acar\nont{\mathit{all}}{} \mid
      \acdr\nont{\mathit{all}}{}
    \end{align*}
  \end{minipage}
  \begin{minipage}{.45\columnwidth}
    \input{equivalent-program}
  \end{minipage}\\
  \mbox{}\hfill(a)\hfill(b)\hfill\mbox{}
  \caption{A possible grammar generated by the proof of
    Lemma~\ref{lemma:grammar-undecidable} and a program to realize the
  grammar.}\label{fig:turing-pgm}
\end{figure}

Notice that  each production  in Lemma~\ref{lemma:grammar-undecidable}
had a single non-terminal on the RHS. There are two characteristics of
the grammar produced from liveness-analysis  that are are relevant for
replaying the  earlier proof:  (a) If $\mathsf{S}  \rightarrow \beta_1
\mathsf{S'}  \beta_2$  was a  production  in  the earlier  proof,  the
grammar  generated from  liveness  analysis of the constructed program
will  have the  production
$\mathsf{S}      \rightarrow      \beta_1     \mathsf{S'}      \beta_2
\nont{\mathit{\mathit{all}}}{}$,  and (b)  other  productions are  of the  form
$\mathsf{S}       \rightarrow       \beta_1'\mathsf{S'}       \beta_2'
\nont{\mathit{all}}{}$.

It is clear that if the TM given as an instance of the halting problem
accepts the input string, then the earlier derivation can be replayed,
every   time   replacing   $\nont{\mathit{\mathit{all}}}{}$   in   the
sentential form by $\epsilon$. However, if  the TM does not accept the
input string, then every sentential form derived from the start symbol
would    have     at    least    one    non-terminal from the grammar
of Lemma~\ref{lemma:grammar-undecidable} that is different
from $\nont{\mathit{final}}{c}$.   Thus  $\$$  would  not  be
derivable from the grammar.
\end{proof}

\begin{figure}[t!]
\includegraphics[width=\columnwidth]{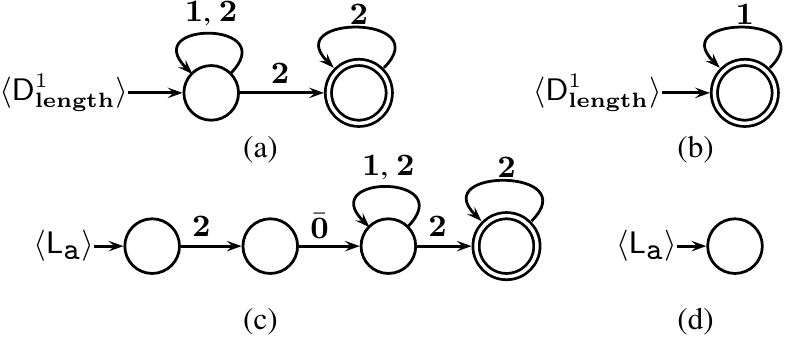}
\caption{(a) The  grammar rules for \var{\Df{\length}{1}}
  converted into an automaton, and (b) its DFA. The same for \var{\Lanv{\pa}{}}
  are in (c) and (d).}\label{fig:example-automata}
\figrule
\end{figure}



We circumvent the problem of undecidability by over approximating the
CFG by non-deterministic finite state automata (NFA) using
Mohri  and Nederhof~\cite{mohri00regular} method.
For example, the grammar fragment for 
$\var{\Df{\length}{1}}$ after the Mohri-Nederhof transformation is:
 \begin{eqnarray*}
   \var{\Df{\length}{1}} &\rightarrow& \clazy\var{\Df{\length}{1'}} \mid
   \acdr\var{\Df{\length}{1}}
   \mid \clazy\var{\Df{\length}{1}}\\
   \var{\Df{\length}{1'}} &\rightarrow& \clazy\var{\Df{\length}{1'}}
   \mid \epsilon
 \end{eqnarray*}

The strongly regular grammar is converted  into a set of NFAs, one for
each $\var{\Lanv{x}{}}$.  The  $\hookrightarrow$ simplification is now
done on the NFAs by  repeatedly introducing $\epsilon$ edges to bypass
pairs  of consecutive  edges  labeled \bcar\acar\  or \bcdr\acdr\  and
constructing  the $\epsilon$-closure  till a  fixed point  is reached,
after which  the edges labeled  \bcar \  and \bcdr\ are  deleted.  
\cmt{ 
  The details of the  algorithm, its correctness and  termination proofs are
given  in~\cite{karkare07liveness,asati14lgc}. }
The
resulting  automaton   has  edges  labeled  with   \acar,  \acdr\  and
\clazy\ only.  In this  automaton, for every  edge labeled  \clazy, we
check if the source node of the edge has a path to a final 
state.  If it does, we mark  the source node as final. Finally, we
remove all the edges labeled \clazy\  and convert the automaton into a
deterministic    automaton.    These steps   effectively   implement    the
$\hookrightarrow$   simplification  rules   for   \bcar,  \bcdr,   and
\clazy\ to  obtain forward access paths.  While checking for  liveness during
garbage  collection, a  forward access path  is valid  only if  it can
reach a  final state.  Figure~\ref{fig:example-automata}(a)  shows the
NFA that is obtained from  the grammar for \var{\Df{\length}{1}}, and   the
final  DFA  is  shown in  Figure~\ref{fig:example-automata}(b).   This
expectedly  says that for a  demand  $\sigma_{\length}$,  the
liveness of  the argument of \length\  is $\acdr^{*}$  (the spine  of the  list is
traversed).  Similarly, Figure~\ref{fig:example-automata}(c) shows the
NFA     for     \var{\Lanv{\pa}{}}.      The     DFA in
Figure~\ref{fig:example-automata}(d)  does  not   accept  any  forward
paths, reflecting  the  lazy  nature  of  our  language.   Since
\length\ does not  evaluate the elements of  the argument list,
the  closure for  \pa\ is  never evaluated  and is  reclaimed whenever
liveness-based GC triggers beyond $\pi_9$.

  \SetStartEndCondition{ }{}{}%
  \SetKwProg{Pro}{procedure}{\string:}{}
  \SetKwProg{Fn}{function}{\string:}{}
  \SetKwFunction{Range}{range}
  \SetKw{KwTo}{in}\SetKwFor{For}{for}{\string:}{}%
  \SetKwIF{If}{ElseIf}{Else}{if}{:}{elif}{else:}{}%
  \SetKwFor{While}{while}{:}{fintq}%
  \AlgoDontDisplayBlockMarkers\SetAlgoNoEnd\SetAlgoNoLine%
  \SetKwFunction{Lgc}{$\mathsf{lgc}$}%
  \SetKwFunction{Copy}{$\mathsf{copy}$}%
  \SetKwFunction{LCopy}{$lgcCopy$}%
  \SetKwFunction{RCopy}{$rgcCopy$}%

\begin{algorithm}[t!]
  \Pro{\Lgc{}}
     {
       \For {each reference $\mathsf{ref}$ in root set}
            {$\mathsf{ref}$ = \Copy($\mathsf{ref}, \mathsf{init}(\mathsf{ref.dfa})$)\;}
            ${\mathsf{copyReferencesOnPrintStack}()}$\;  
     }
     \Fn{\Copy{$\mathsf{ref}, \mathsf{state}$}}
        {
          \eIf {$\mathsf{final}(\mathsf{state}$)}
             {
               $\mathsf{newRef} = \mathsf{dupHeapCell}(\mathsf{ref})$\;
               \If{$\mathsf{ref\!.cell}$($\mathsf{ref}$)
                 is a cons cell $\mathsf{(cons~arg_1~arg_2)}$}
                  {
                    {
                      $\mathsf{newRef}\!.\mathsf{arg_1}  = \Copy(\mathsf{arg_1}, \mathsf{next}(\mathsf{state}, 0))$\;
                    }
                    {
                      $\mathsf{newRef}\!.\mathsf{arg_2} =  \Copy(\mathsf{arg_2},
                      \mathsf{next}(\mathsf{state} , 1))$\;
                    }    
                  }
               \If
                  {$\mathsf{ref\!.cell}$ is a
                    closure cell, generically  $\mathsf{(binop~arg_1~arg_2)}$}
                   { 
                     $\mathsf{newRef}\!.\mathsf{arg_1} = \Copy(\mathsf{arg_1}, \mathsf{init}(\mathsf{arg_1.dfa}))$\;
                     $\mathsf{newRef}\!.\mathsf{arg_2} = \Copy(\mathsf{arg_2}, \mathsf{init}(\mathsf{arg_2.dfa}))$\;
                   }
             }
             {$\mathsf{newRef = ref}$}
\KwRet $\mathsf{newRef}$\;
        }
        \caption{Liveness-based garbage collection.  \label{algo:lgc-a}}
\end{algorithm}

\newcommand{\wdh}{.99\textwidth}
\newcommand{\LGCLine}{blue}
\newcommand{\RGCLine}{red}
\newcommand{\ReachLine}{black}
\newcommand{\UseLine}{light-blue}

\newcommand{\mycaption}{\caption{(Continued). Memory usage.  
The \RGCLine\ and the \LGCLine\ curves indicate the number of cons
cells  in  the  active   semi-space  for  RGC  and  LGC
respectively.  The \ReachLine\ curve represents the number of
reachable cells and the  \UseLine\ curve represents the
number  of  cells  that  are actually  live  (of  which
liveness analysis does a static approximation).  x-axis
is the time measured  in number of cons-cells allocated
(scaled down by factor $10^5$). y-axis is the number of
cons-cells (scaled down by $10^3$).}}
\begin{figure*}[p]
 \caption{Memory usage.  
The \RGCLine\ and the \LGCLine\ curves indicate the number of cons
cells  in  the  active   semi-space  for  RGC  and  LGC
respectively.  The \ReachLine\ curve represents the number of
reachable cells and the  \UseLine\ curve represents the
number  of  cells  that  are actually  live  (of  which
liveness analysis does a static approximation).  x-axis
is the time measured  in number of cons-cells allocated
(scaled down by factor $10^5$). y-axis is the number of
cons-cells (scaled down by $10^3$).}
\label{fig:memory-usage}
\addSubFig{fibheap}
\end{figure*}

\begin{figure*}[p]
  \ContinuedFloat\mycaption
  \addSubFig{sudoku}
\end{figure*}

\begin{figure*}[p]
  \ContinuedFloat\mycaption
  \addSubFig{nperm}
\end{figure*}

\begin{figure*}[p]
  \ContinuedFloat\mycaption
  \addSubFig{lcss}
\end{figure*}

\begin{figure*}[p]
  \ContinuedFloat\mycaption
  \addSubFig{knightstour}
\end{figure*}

\begin{figure*}[p]
  \ContinuedFloat\mycaption
  \addSubFig{treejoin}
\end{figure*}

\begin{figure*}[p]
  \ContinuedFloat\mycaption
  \addSubFig{lambda}
\end{figure*}

\begin{figure*}[p]
  \ContinuedFloat\mycaption
  \addSubFig{nqueens}
\end{figure*}

\begin{figure*}[p]
  \ContinuedFloat\mycaption
   \begin{subfigure}{\textwidth}
     \centering
    \includegraphics[width=\wdh]{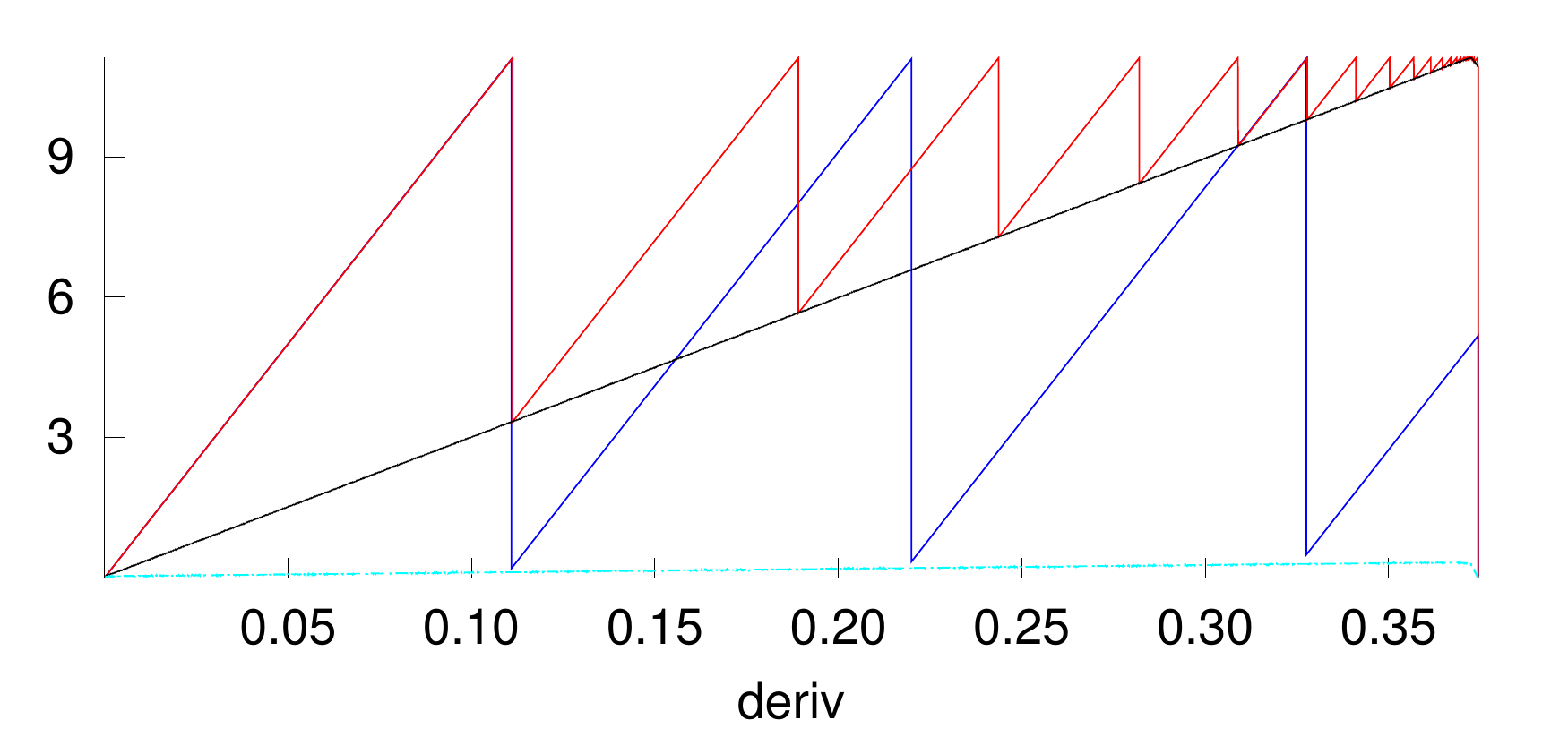}
  \end{subfigure}
   \begin{subfigure}{\textwidth}
     \centering
    \includegraphics[width=\wdh]{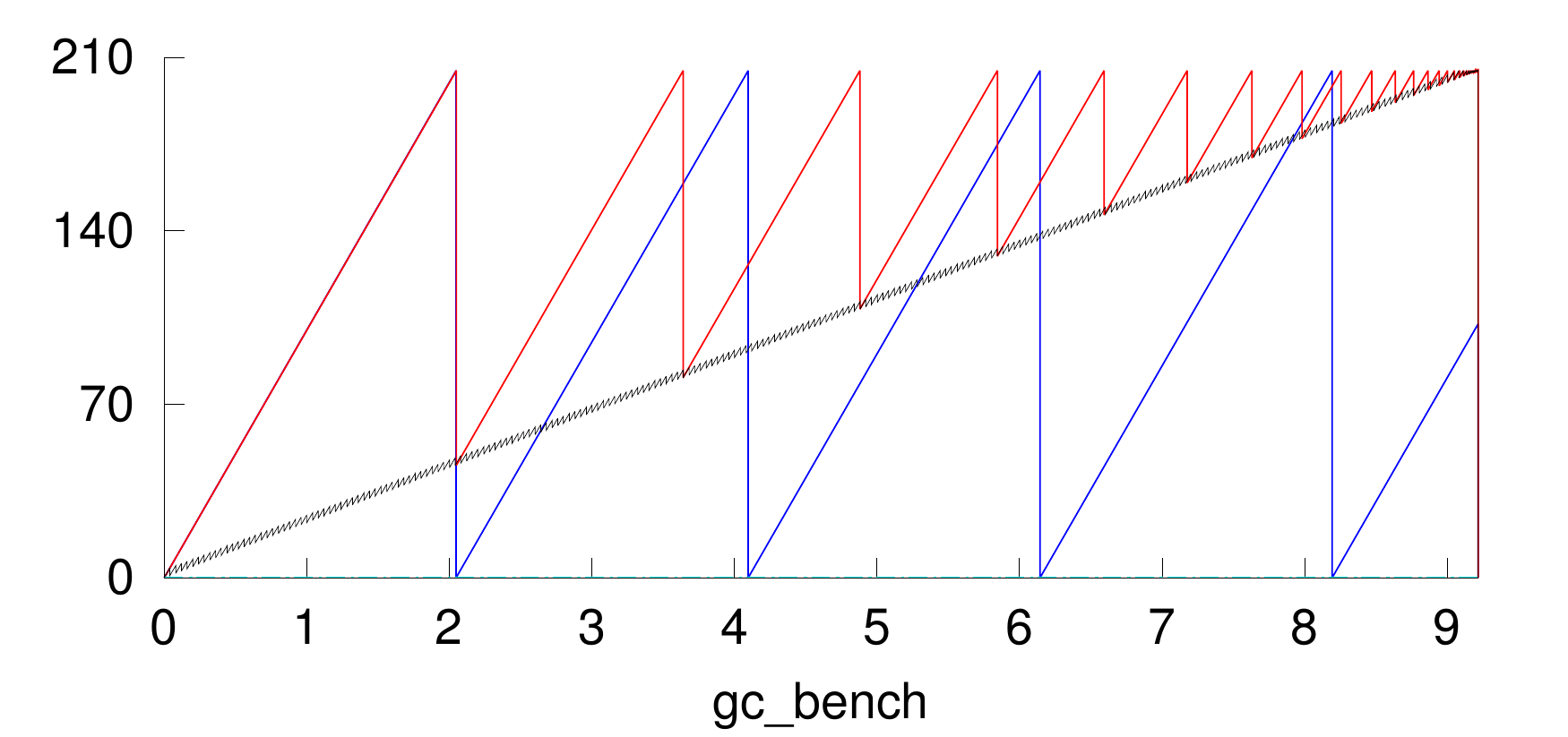}
  \end{subfigure}
\end{figure*}

\begin{figure*}[p]
  \ContinuedFloat\mycaption
  \begin{subfigure}{\textwidth}
     \centering
    \includegraphics[width=\wdh]{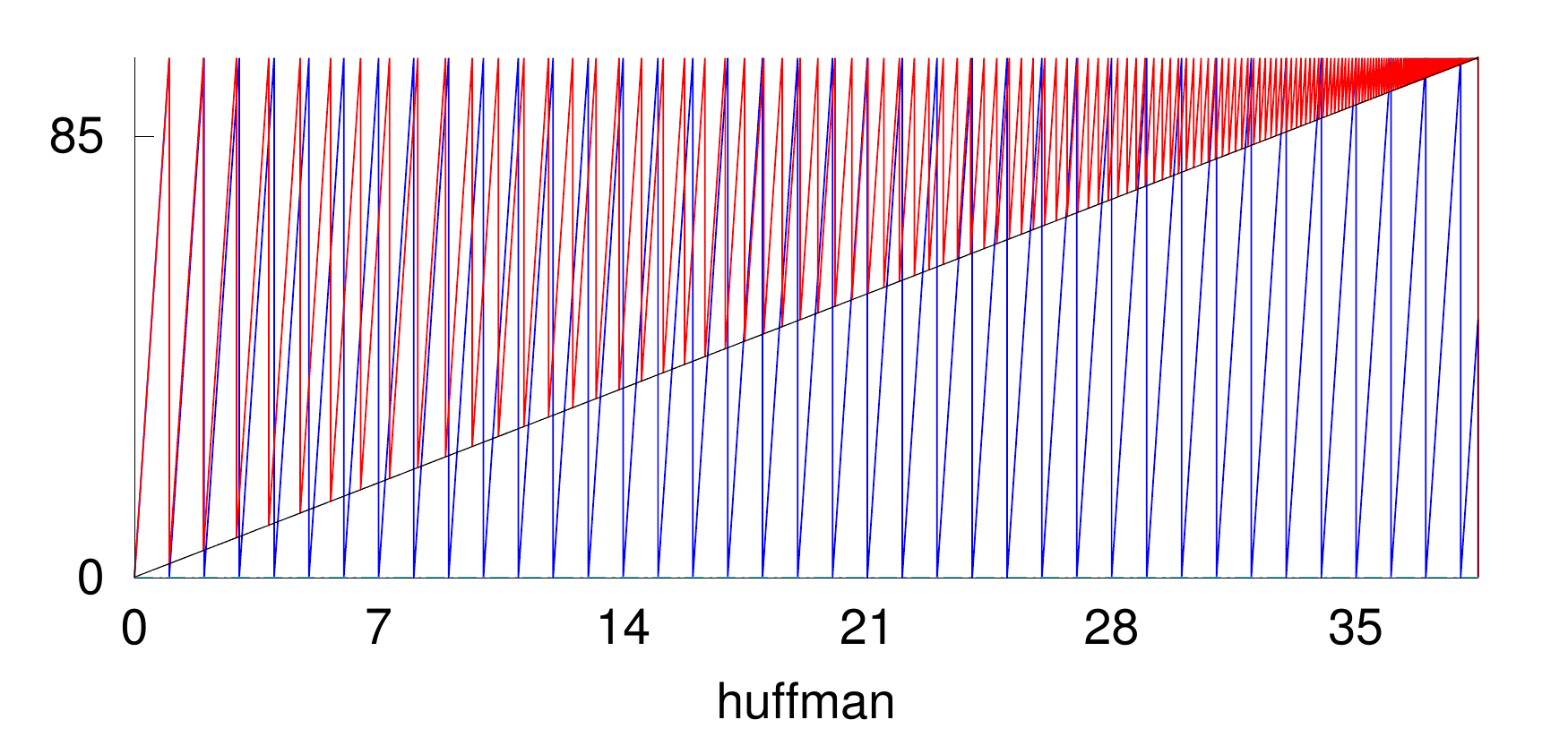}
  \end{subfigure}
  \begin{subfigure}{\textwidth}
     \centering
   \includegraphics[width=\wdh]{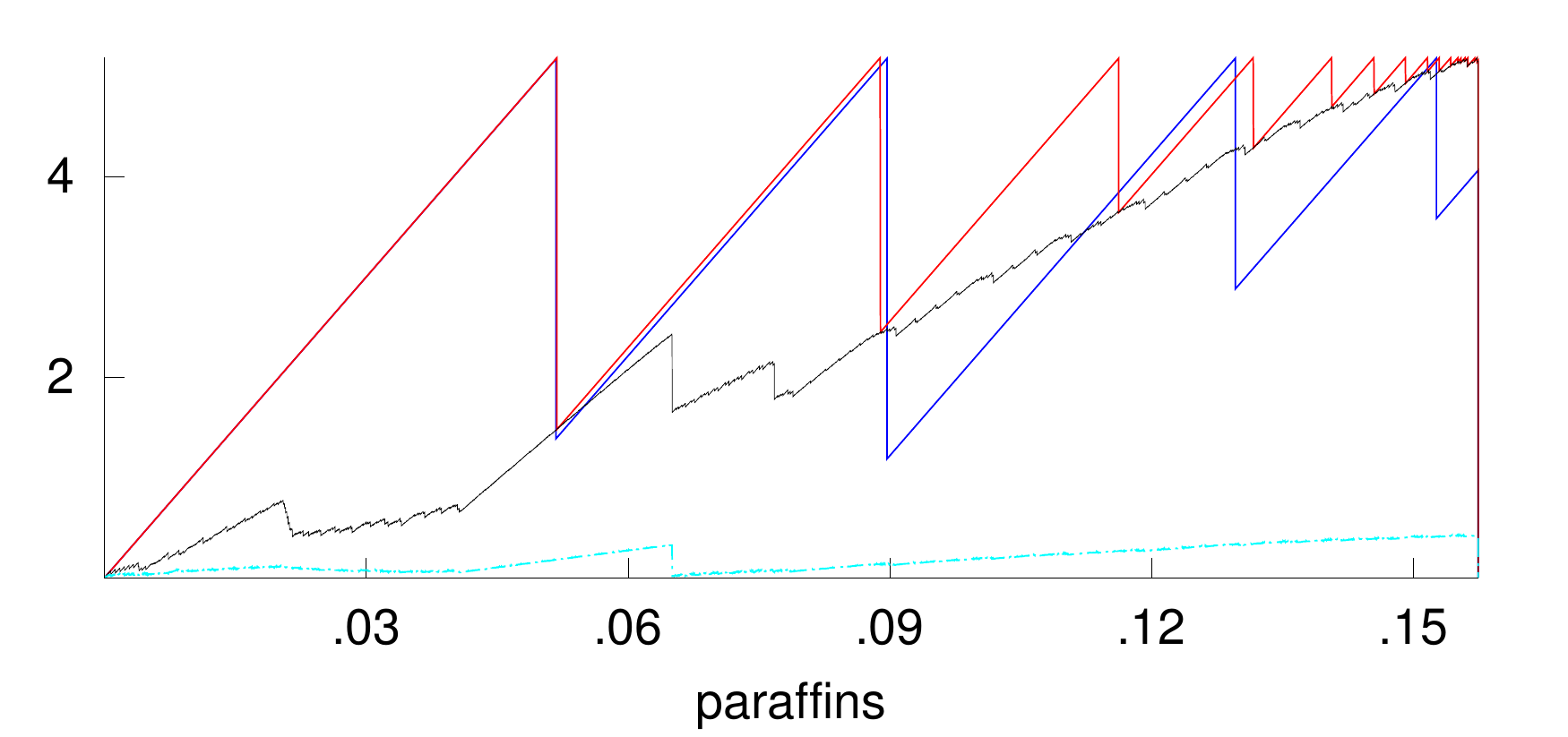}
  \end{subfigure} 
 \figrule
\end{figure*}

\section{The GC scheme}
\label{sec:GC-scheme}
In        the       liveness        analysis       described        in
Section~\ref{sec:liveness-analysis},   the  liveness   of  a   closure
variable is  derived from dependence  chains along all paths  from the
program point  where the closure  was created to  reachable evaluation
points.  Assume,  for the  sake of  concreteness, that  $\epath_1$ and
$\epath_2$ are two such evaluation points.  During GC,
we would like to use a more precise liveness, based on the actual paths
taken  during  execution.   Therefore   we  create  separate  liveness
automata for dependences  along paths to $\epath_1$  and $\epath_2$, in
addition to  automata for dependences  along paths to  both $\epath_1$
and $\epath_2$.  The closure carries  the liveness environment for its
free  variables (as  pointers  to automata,  one  for each  variable).
Initially the liveness  environment is based on  the dependences along
both  $\epath_1$   and  $\epath_2$.   However,  after   evaluating  an
\SIF\ condition, the liveness environments are updated to one based on
either   $\epath_1$  or   $\epath_2$,  so   that  subsequent   garbage
collections are based on more precise liveness information.

Based on  the above considerations,  we restrict the  possible garbage
collection points in a function body to the following:
\begin{enumerate}
\item We  statically over-approximate the memory required  to create
  the closures for each function body. On entering a function, if the
  available memory is  less than this requirement for  the function, a
  GC is triggered.
\item Since the evaluation of a  \SIF\ condition may trigger a 
  collection, after evaluating the condition   the available memory
  is  checked once  again against  a  revised estimate  of the  memory
  (based on  value of the condition)  required to execute the  rest of
  the program.  A GC is triggered if enough memory is unavailable.
\end{enumerate}


We shall call a  unit of allocatable memory as a {\em  cell}.  A cell can
hold    a    basic    value    ($\mathsf{bas}$),    the    constructor
\CONS\ $(\mathsf{cons~  arg_1~arg_2}$) or a closure.   The closure, in
turn,    can    be    one   of    ($\mathsf{unop~arg}$),    ($\mathsf{
  binop~arg_1~arg_2}$)  and  function application  ($\mathsf{f~arg})$.
Here each $\mathsf{  arg_i}$ is a reference to another  heap cell.  In
addition, the closure also carries  a pointer to a DFA (denoted
$\mathsf{arg_i.dfa_i}$) for each $\mathsf{arg_i}$.
Algorithm~\ref{algo:lgc-a}   describes   the   garbage
collection scheme.  Starting with the root set, each cell pointed by a
live reference  (i.e., whose associated  DFA state is final)  is copied
using $\mathsf{copy}$.   Copying a \CONS\ cells  just involves copying
the  cell  itself   and  conditionally  copying  the   \CAR\  and  the
\CDR\ fields after referring to the  next states of the DFA.  If
the reference points to a closure, then, as noted earlier, the closure
carries pointers  to the liveness  DFAs of its  arguments.  These
are  used to recursively  initiate copying of  the arguments.
Note that  the copying  strategy for  $\mathsf{(unop~arg_1~arg_2)}$ or
$\mathsf{(f~arg_1)}$ are similar to $\mathsf{(binop~arg_1~arg_2)}$ and
have not been shown.

\begin{table*}[t!]
\caption{Statistics for liveness analysis and garbage collection}
\label{tab:exp-results}
\centering
\renewcommand{\arraystretch}{.9}
\begin{tabular}{|l||c@{\ }|c@{\ }|c@{\ }|c@{\ }|c@{\ }|c@{\ }|c@{\ }|c@{\ }|c@{\ }|c@{\ }|c@{\ }|c@{\ }|} \hline
 \multicolumn{1}{|r||}{Program} & \rot{\tt fibheap} & \rot{\tt sudoku} & \rot{\tt nperm} & \rot{\tt paraffins} & \rot{\tt lcss} & \rot{\tt huffman} & \rotmore{\tt knightstour} & \rot{\tt nqueens} & \rot{\tt deriv} & \rot{\tt treejoin} & \rot{\tt lambda} & \rot{\tt gc\_bench} \\ \hline
 \#CFG Nonterminals & 621 & 1422 & 662 & 1174 & 642 & 499 & 660 & 404 & 328 & 615 & 669 & 390 \\ \hline
 \#CFG Rules & 1176 & 2009 & 866 & 1773 & 1206 & 818 & 883 & 643 & 468 & 1328 & 1088 & 450 \\ \hline
 \#DFA States & 1761 & 4283 & 1546 & 3346 & 1666 & 1414 & 1519 & 889 & 809 & 1803 & 1703 & 571 \\ \hline
 \#DFA Transitions & 2829 & 7690 & 2522 & 6086 & 2726 & 2528 & 2420 & 1170 & 1435 & 2797 & 2580 & 788 \\ \hline
 DFA Gen Time (sec) & 37.28 & 655.41 & 0.94 & 13.22 & 8.66 & 4.00 & 10.97 & 0.36 & 0.61 & 903.14 & 11.01 & 0.10 \\ \hline
\end{tabular}\\
 (a) Data for Liveness Analysis 

\newcommand{\rlratio}{$\frac{LGC}{RGC}$}
\newcommand{\rlratiotrue}{\raisebox{-2mm}{\rule{0pt}{6mm}}%
  $\frac{1.16\times LGC}{RGC}$}
\renewcommand{\arraystretch}{.6}
\begin{tabular}{|@{}c@{}|@{}r@{\ }|@{\ }r@{\ }|@{}r@{\ }|@{}r@{\ }|@{\ }r@{\ }| @{\ }r@{\ }|@{\ }r@{\ }|@{\ }r@{\ }|@{}r@{}| @{\ }r@{\ }|@{\ }r@{\ }|@{\ }r@{\ }|@{\ }r@{\ }|@{\ }r@{\ }|@{}r@{}|}
\hline
  &   \multicolumn{2}{@{}c@{\ }|}{\#Cells collected/GC}
  &   \multicolumn{2}{c@{\ }|}{\#Cells touched/GC}
  &   \multicolumn{2}{c@{\ }|}{\#GCs}
  &   \multicolumn{3}{c@{}|}{Peak Memory Required} 
  &   \multicolumn{3}{c@{\ }|}{GC time (sec)} 
  &   \multicolumn{3}{@{}c@{}|}{Total Exec time (sec)}\\
\cline{2-16}
{Program}&RGC&LGC&RGC&LGC&RGC&LGC&RGC&LGC&\rlratiotrue&RGC&LGC&\rlratio&RGC&LGC&\rlratio\\
\hline
\hline

\rotmore{\tt fibheap} & 3466.2 & 4164.5 & 33576.6 & 50957.9 & 1333 & 1108 & 37043 & 37043 & 1.16 & 1.46 & 20.00 & 13.70 & 13.82 & 33.23 & 2.40
\\ \hline

\rotmore{\tt sudoku} & 931.3 & 2328.7 & 3134.6 & 2950.2 & 179 & 72 & 4066 & 2960 & 0.84 & 0.01 & 0.04 & 4.64 & 0.11 & 0.16 & 1.5
\\ \hline

\rotmore{\tt nperm} & 4684.4 & 14212.5 & 22743.2 & 23127.1 & 710 & 235 & 27428 & 25343 & 1.07 & 0.24 & 1.24 & 5.20 & 1.80 & 3.39 & 1.88
\\ \hline

\rotmore{\tt paraffins} & 661.7 & 2920.5 & 4522.6 & 2933.0 & 16 & 4 & 5185 & 3733 & 0.83 & 0.00 & 0.00 & 2.01 & 0.01 & 0.02 & 1.2
\\ \hline

\rotmore{\tt lcss} & 8064.7 & 18268.7 & 14177.9 & 4604.4 & 30 & 14 & 22243 & 16296 & 0.84 & 0.01 & 0.02 & 3.02 & 0.13 & 0.18 & 1.36
\\ \hline

\rotmore{\tt huffman} & 10533.8 & 100010.0 & 89536.1 & 88.6 & 356 & 38 & 100070 & 72 & 0.00 & 0.64 & 0.01 & 0.02 & 2.70 & 2.55 & 0.94
\\ \hline

\rotmore{\tt knightstour} & 179155.0 & 312454.0 & 498645.0 & 534562.0 & 529 & 304 & 677800 & 642303 & 1.09 & 6.24 & 64.43 & 10.32 & 55.29 & 134.35 & 2.43
\\ \hline

\rotmore{\tt nqueens} & 2607.4 & 9529.1 & 7493.4 & 829.0 & 3345 & 916 & 10101 & 1082 & 0.12 & 0.36 & 0.18 & 0.49 & 4.68 & 5.87 & 1.25
\\ \hline

\rotmore{\tt deriv} & 854.6 & 10755.3 & 10269.3 & 420.3 & 31 & 3 & 11124 & 589 & 0.06 & 0.00 & 0.00 & 0.05 & 0.05 & 0.05 & 0.9
\\ \hline

\rotmore{\tt treejoin} & 50284.2 & 936525.0 & 1566250.0 & 700756.0 & 116 & 5 & 1616533 & 887005 & 0.63 & 3.84 & 1.90 & 0.49 & 6.66 & 5.50 & 0.82
\\ \hline

\rotmore{\tt lambda} & 7271.7 & 8448.4 & 13194.2 & 29156.3 & 775 & 667 & 20466 & 18169 & 1.02 & 0.17 & 4.70 & 28.48 & 2.77 & 8.32 & 3.00
\\ \hline

\rotmore{\tt gc\_bench} & 14932.7 & 204774.0 & 189880.0 & 33.2 & 48 & 4 & 204813 & 72 & 0.00 & 0.11 & 0.00 & 0.00 & 1.45 & 1.22 & 0.84
\\ \hline

\end{tabular}\\[1mm]
(b) Comparing RGC with LGC. Note that the size of an LGC cell is 1.16 times the size of an RGC cell, Total Exec time includes GC time.

\end{table*} 

The evaluation of  the top-level expression in a program  is driven by
a printing function~(Section
\ref{sec:semantics}).  
We extend liveness-based  garbage  collection  to
this function.
\section{Experimental evaluation}
\label{sec:experiments}
Our experimental setup  consists of an interpreter
for  our  language,  a liveness  analyzer,  and  a
single generation copying  collector.  The garbage
collector can be configured to work on the
basis  of  reachability  (RGC  mode)  or  use  
liveness  DFAs  (LGC  mode).  

Our benchmark  consists of programs taken  from nofib~\cite{nofib} and
other repositories for functional programs~\cite{PLT-Scheme, gc_bench,
  huffman-sicp}.     We    ran    the   experiments    on    8    core
Intel\textsuperscript{\textregistered}        Core\textsuperscript{TM}
i7-4770 3.40GHz CPU  having 8192KB L2 cache, 16GB RAM,  running 64 bit
Ubuntu 14.04.


\subsection{Results}

The statistics  related to  liveness analysis  and DFA  generation are
shown in Table~\ref{tab:exp-results}(a). We  observe that the analysis
of  all  programs  except \verb@treejoin@  and  \verb@sudoku@  require
reasonable time.   The bottleneck in  our analysis  is the NFA  to DFA
conversion with  worst-case exponential behaviour. However,  since the
analysis has  to be  done only  once and its  results can be cached and
re-used, the time spent in analysis may be considered acceptable.

Table~\ref{tab:exp-results}(b) compares  GC statistics
for RGC and  LGC.  We report the number of  GC events,
average  number of  cells  reclaimed per  GC,  average
number of cells  touched per GC and the  total time to
perform all collections.   It is no surprise that the  number of cells
reclaimed per garbage  collection is higher and the  number of garbage
collections  lower  for  LGC.   The  cost of  LGC  is  higher  garbage
collection time, which increases the  overall execution time even with
reduced number of  collections.  However, the execution time  of LGC is
still comparable for most benchmarks (slowdown within 5X of
RGC in most cases) and better  for 3 benchmarks (2X speedup in the
best case).  Note that  \verb@gc_bench@~\cite{gc_bench} is a synthetic
benchmark that allocates complete binary
trees  of various  sizes that  are  never used  by the  program. As  a
result,  the benchmark  highly favours  LGC.  The  benchmark has  been
included  for  completeness,  and  we  do  not  consider  its  numbers
as being representative of real programs.

Memory   usage    graphs   for    the   benchmarks   are    shown   in
Figure~\ref{fig:memory-usage}.  In  all the  programs we can  see that
the curve  corresponding to  LGC (\LGCLine\ line)  dips below  the RGC
curve (\RGCLine\ line)  during GC.  The graphs also  include the curve
for reachable cells (\ReachLine) and live cells (\UseLine). These were
obtained by forcing  RGC to run at very high  frequency. The curve for
live  cells were  obtained by  recording  heap access  times and  post
processing the data at  the end of the program. Note  that the size of
an LGC  cell is 1.16 times  the size of  a RGC cell as  it potentially
might have  to store references  to liveness DFA of  closure arguments
(if the cell is a closure).

As  demonstrated by  the gap  between the  \RGCLine\ and  the \UseLine\
lines, a  large number of  cells which are  unused by the  program are
still copied  during RGC. LGC does  a much better job  of closing this
gap but still falls short of the  precision achieved by LGC in case of
eager languages ~\cite{asati14lgc}.  A major source of inefficiency in
LGC is  multiple traversals  of already copied  heap cells.  Since LGC
does not mark the heap cells after the first visit, the same cells can
be  repeatedly  visited  with   different  liveness  states.  We  have
mitigated  this  problem  by   implementing  a  heuristic  which  avoids 
revisiting closures and function arguments more than once. 

\section{Related work} 
\label{sec:relatedwork}
The impact of  liveness on the effectiveness of  GC is
investigated in~\cite{Hirzel}.   They observe that  liveness can
significantly  impact   garbage  collection,  but  only   when  it  is
interprocedural.  As far as memory requirement is concerned, our paper
demonstrates this observation.

There have been  several attempts to use liveness  analysis to improve
GC for imperative  languages.~\cite{khedker07heap} presents a liveness
analysis and uses  the results for inserting  nullifying statements in
Java  programs. In~\cite{ran.shaham-sas03},  temporal properties  like
liveness  are checked  against an  automaton modeling  heap accesses.
Both these approaches are intraprocedural in scope.

In the space of functional languages, there are: rewriting methods
such as
deforestation~\cite{wadler88deforest,gill93ashort,chitil99deforest},
sharing analysis based reallocation~\cite{jones89compile}, region
based analysis~\cite{tofte98region}, and insertion of compile-time
nullifying statements~\cite{inoue88analysis, lee05static}.  All compile-time
marking approaches rely on an efficient and precise alias analysis and
cannot provide significant improvement in its absence. The only work
in the space of lazy languages seems to be~\cite{Hamilton}
which touches upon only basic techniques of compile-time garbage
marking, explicit deallocation and destructive allocation. An
interesting approach suggested in~\cite{HofmannJ03} is to annotate the
heap usage of first-order programs through linear types.  The
annotations are then used to serve memory requests through
re-allocation.  However, this requires the user to write programs in a
specific way.
Safe-for-space~\cite{appel.cps}              approaches~\cite{Clinger,
  shao00efficient}  reduce the  amount of  heap used  by a  program by
allocating   closures    in   registers   and   through    tail   call
optimizations. However, these approaches take care of only part of the
problem addressed  by our  analysis as the  program can  still contain
unused objects and closures that are reachable.

Simplifiers~\cite{ONeill}  are  abstractly  described  as  lightweight
daemons that  attach themselves to  program data and,  when activated,
improve the  efficiency of the  program. Our liveness-based GC  can be
seen as  an instance  of a  simplifier which  is tightly  coupled with
garbage  collectors.   The approach  that  is  closest to  the  method
described  in  this  paper  is the  liveness-based  garbage  collector
implemented  in~\cite{karkare07liveness,asati14lgc} and  address eager
languages. We extend this to handle lazy evaluation and closures.
 


\section{Future work and conclusions}
\label{sec:conclusion}
We have  extended liveness-based  GC  to lazy
  languages  and shown  its benefit  for a  set of  benchmark
  programs.  We  defined a liveness analysis  of programs manipulating
  the  heap and  proved  it  correct with  respect  to a  non-standard
  semantics that served as a specification of liveness.  The result of
  the analysis  is a  set of grammars,  whose membership  question was
  shown to be undecidable. The  grammars are therefore approximated by
  DFAs and used by the garbage collector to improve collection.

In addition to  evaluated data, our collector  also reclaims closures.
For this, we  had to modify the standard run-time  representation of
closures  to carry  liveness  of its free  variables and  periodically
update the liveness during execution  time.  As expected, our garbage
collector  reclaims more  garbage and  reduces memory  requirement of
programs.  An  additional benefit is  that in  spite of using  a more
expensive collector, the execution times remains comparable in
most cases, and even improves for some programs.

The graphs show that even in lazy languages there are large amount of
dead cells that can be collected early. In spite of collecting
closures there still exists a gap between actual and perceived
livenesses.  Language features that could be introduced are
higher-order functions and recursive lets. We believe that functions
can be used to model recursive lets and this should be a simple
extension to our liveness analysis.  For higher-order we intend to use
firstification techniques~\cite{Mitchell:2009}.

  Orthogonally,   we   plan  to   improve   the   efficiency  of   the
  liveness-based garbage  collector using heuristics such  as limiting
  the depth of DFA, merging  nearly-equivalent states and using better
  representation (for example BDDs~\cite{Bryant86})  and algorithms for
  automata manipulation.  We also  need to investigate the interaction
  of liveness with  other collection schemes, such  as incremental and
  generational collection.   It might  be interesting  to use  a mixed
  mode GC  scheme which allows the  costs of LGC to  be amortized over
  several runs of RGC. In summary, we need to investigate ways to make
  liveness-based GC attractive for practical collectors.

\balance
\bibliography{fun_hra}{}
\bibliographystyle{abbrv}
\appendix
\begin{figure*}[b!]
\section{Complete Minefield Semantics}\label{sec:minefield}

\begin{center}\footnotesize
\renewcommand{\arraystretch}{1.2}
\begin{tabular}{|c|c|c|}
\hline
Premise & Transition & Rule name \\ 
\hline
\hline
&\makecell{ $\rho, (\rho', \ell, e, \cred{\sigma'})\!:\!S,
  \heap, \kappa, \cred{\sigma}$  $\rightsquigarrow \rho', S, \heap[\ell :=
    \kappa], e, \cred{\sigma'}$ }   &  {\sc const}
\\
\hline
& \makecell[t]{$\rho, (\rho', \ell, e, \cred{\sigma'})\!:\!S, \heap,
  (\CONS~x~y), \cred{\sigma}$  $\rightsquigarrow$ \\ $
  \rho', S, \heap[\ell := (\rho(x),\rho(y))], e, \cred{\sigma'}$}     &  {\sc cons} \\
\hline
\makecell[t]{ $\rho(x) \mbox{ is } \bot$} & $\rho, S,
  \heap, (\CAR~x), \cred{\sigma} \rightsquigarrow \bang$   &
{\sc car-bang} 
\\
\hline
\makecell[t]{$\heap(\rho(x)) \mbox{ is } (v, d)$} & \makecell[t]{$\rho, (\rho', \ell, e,
  \cred{\sigma'} )\!:\!S, \heap, (\CAR~x), \cred{\sigma}$  $
  \rightsquigarrow $ \\ $\rho', S, \heap[\ell := v], e, \cred{\sigma'}$}      &
{\sc car-select} \\
\hline
\makecell[t]{ $\heap(\rho(x))
\mbox{ is } \langle s, \rho'\rangle$ }& \makecell[t]{$\rho, S,
  \heap, (\CAR~x), \cred{\sigma} \rightsquigarrow $\\ $  \rho', (\rho, \rho(x),
  (\CAR~x), \cred{\sigma})\!:\!S, \heap, s, \cred{(\clazy \cup \acar)\sigma }$ }        &
{\sc car-clo}\\
\hline
\makecell[t]{$\heap(\rho(x)) \mbox{ is } (\langle s, \rho'\rangle, d)$} & \makecell[t]{$\rho,\, S,\,  \heap,\,
(\CAR~x), \cred{\sigma} \rightsquigarrow$ \\ $ \rho', \,(\rho, addr(\langle
s, \rho'\rangle), (\CAR~x),\cred{\sigma} )\!:\!S,\, \heap,\, s, \, \cred{\sigma}$ }     &
{\sc car-1-clo} \\

\hline
\makecell[t]{$\rho(x) \mbox{ is } \bot$  or $\rho(y) \mbox{ is }
  \bot$} & $\rho, S,
  \heap, (+~x~y), \cred{\sigma} \rightsquigarrow \bang$   &
{\sc prim-bang} 
\\
\hline

\makecell[t]{$\heap(\rho(x)), \heap(\rho(y)) \in \mathbb{N}$}
 & 

\makecell[t]{$\rho, (\rho', \ell, e, \cred{\sigma'})\!:\!S, \heap,
  (+~x~y), \cred{\sigma}$  $\rightsquigarrow$ \\ $\rho', S, \heap[\ell
    := \heap(\rho'(x)) + \heap(\rho'(y))], e, \cred{\sigma'}$}     
 &
{\sc prim-add} \\
\hline
\makecell[t]{$\heap(\rho(x)) \mbox{ is } \langle s, \rho'\rangle$} &\makecell[t]{$\rho, S,
  \heap, (+~x~y), \cred{\sigma}$  $\rightsquigarrow$ \\ $ \rho', (\rho, \rho(x),
  (+~x~y), \cred{\sigma})\!:\!S, \heap, s, \cred{\clazy\sigma}$}      &
{\sc prim-1-clo} \\
\hline
\makecell[t]{$\heap(\rho(y)) \mbox{ is } \langle s, \rho'\rangle $} & \makecell[t]{$\rho,
  S, \heap, (+~x~y), \cred{\sigma}$  $\rightsquigarrow$ \\ $ \rho', (\rho,\rho(y),
  (+~x~y), \cred{\sigma})\!:\!S, \heap, s, \cred{\clazy\sigma}$}      &
{\sc prim-2-clo} \\
\hline
\makecell[t]{$\mathit{f}~\mbox{defined as}$
$~(\DEFINE~(f~\myvec{y})~e_{\mathit{f}})$}  & \makecell[t]{$\rho, S, \heap,
  (f~\myvec{x}), \cred{\sigma}$  $\rightsquigarrow$\\$ [\myvec{y} \mapsto
    \rho(\myvec{x})], S, \heap, e_{\mathit{f}}, \cred{\sigma}$}      &
{\sc funcall} \\
\hline
\makecell[t]{$\cred{GC(\rho_1, S_1, \heap_1, (\LET~x\leftarrow
  s~\IN~e), \sigma) = (\rho, S, \heap)}$,\\$\ell$ is a new location}& \makecell[t]{$\rho, S, \heap, (\LET~x\leftarrow
  s~\IN~e), \cred{\sigma}$  $ \rightsquigarrow$ \\ $ \rho\oplus[x
    \mapsto \ell], S, \heap[\ell := \langle s, \lfloor\rho\rfloor_{FV(s)}, \sigma_x\rangle], e, \cred{\sigma}$ \\
    where $\sigma_x\ =\  \lfloor\mathcal{L}(e,\sigma,\Lfonly)\rfloor_{\{x\}}$} &
{\sc let} \\ 
\hline
\makecell[t]{ $\rho(x) \mbox{ is } \bot$} & $\rho, S,
  \heap, (\pi:\SIF~\psi:x~e_1~e_2), \cred{\sigma} \rightsquigarrow \bang$   &
{\sc if-bang} 
\\
\hline
\makecell[t]{$\heap(\rho(x)) \ne 0$}  & \makecell{$\rho, S, \heap, (\pi:\SIF~\psi:x~e_1~e_2),
  \cred{\sigma}$  $\rightsquigarrow \rho, S, \heap,  e_1, \cred{\sigma}$} & {\sc if-true} \\
\hline
\makecell[t]{$\heap(\rho(x)) = 0$} & \makecell{$\rho, S, \heap, (\pi:\SIF~\psi:x~e_1~e_2),
  \cred{\sigma}$   $\rightsquigarrow
\rho, S, \heap,  e_2, \cred{\sigma}$} & {\sc if-false} \\
\hline
\makecell[t]{$\heap(\rho(x)) = \langle s, \rho' \rangle $} & \makecell[t]{$\rho, S, \heap,
  (\pi:\SIF~\psi:x~e_1~e_2), \cred{\sigma}$ $\rightsquigarrow$ \\ $
\rho', (\rho, \rho(x), (\SIF~x~e_1~e_2),  \cred{\sigma})\!:\!S, \heap, s,
\cred{\clazy\sigma}$}
&
{\sc if-clo} \\
\hline
\makecell[t]{
  $\rho(x) \mbox{ is } \bot$} & $\rho, S,
  \heap, (\SRETURN~x), \cred{\sigma} \rightsquigarrow \bang$   &
{\sc return-bang} 
\\
\hline
\makecell[t]{$\heap(\rho(x))~\mbox{is in WHNF with
    value}~v$} & \makecell[t]{$\rho,
  (\rho', \ell, e, \cred{\sigma'})\!:\!S, \heap, (\SRETURN~x), \cred{\sigma}$  $\rightsquigarrow$\\$ \rho', S, \heap[\ell := v], e, 
  \cred{\sigma'}$} &
{\sc return-whnf}\\
\hline
\makecell[t]{$\heap(\rho(x)) = \langle s, \rho' \rangle $} & \makecell[t]{$\rho, S, \heap,
  (\SRETURN~x), \cred{\sigma}$  $
  \rightsquigarrow$ \\
$\rho',~ (\rho, \rho(x), (\SRETURN~x), \cred{\sigma})\!:\!S, \heap,  s,
  \cred{\sigma}$} &
{\sc return-clo} \\
\hline
\end{tabular}

\caption {Minefield semantics. The differences with the small-step
  semantics have been highlighted by shading. \label{fig:minefield-semantics}}
\end{center}

\end{figure*}

\end{document}